\newcommand{\LIPICS}[1]{}\newcommand{\NORMAL}[1]{#1}
\title{Convex Polygon Containment:\\ Improving Quadratic to Near Linear Time}
\author{Timothy M. Chan\thanks{Department of Computer Science, University of Illinois at Urbana-Champaign (tmc@illinois.edu).
Work supported in part by NSF Grant CCF-2224271.}  \and 
Isaac M. Hair\thanks{Department of Computer Science, University of California, Santa Barbara (hair@ucsb.edu).}}
\newtheorem{lemma}{Lemma}
\newtheorem{theorem}[lemma]{Theorem}
\newtheorem{corollary}[lemma]{Corollary}
\title{Convex Polygon Containment:\\ Improving Quadratic to Near Linear Time}
\author{Timothy M. Chan}{Department of Computer Science, University of Illinois at Urbana-Champaign, USA}{tmc@illinois.edu}{https://orcid.org/0000-0002-8093-0675}{Work supported by NSF Grant CCF-2224271.}
\author{Isaac M. Hair}{Department of Computer Science, University of California, Santa Barbara, USA}{hair@ucsb.edu}{https://orcid.org/0000-0001-6992-4488}{}
\titlerunning{Convex Polygon Containment}
\authorrunning{T.\,M. Chan and I.\,M. Hair}
\keywords{Polygon containment, convex polygons, translations, rotations}
\renewcommand{\paragraph}[1]{\subparagraph*{#1}}
\newtheorem{problem}{Problem}
\newcommand{\R}{\mathbb{R}}
\newcommand{\polylog}{\mathop{\rm polylog}}
\newcommand{\eps}{\varepsilon}
\newcommand{\OO}{\widetilde{O}}
\newcommand{\PPP}{{\cal P}}
\newcommand{\LINE}[1]{\overleftrightarrow{#1}}
\newcommand{\ext}[1]{\overleftrightarrow{#1}}
\newcommand{\INT}{I} 
\newcommand{\MATCH}{M} 
\newcommand{\IGNORE}[1]{}
\begin{document}

\maketitle

\begin{abstract}
We revisit a standard polygon containment problem: given a convex $k$-gon $P$ and a convex $n$-gon $Q$ in the plane, find a placement of $P$ inside $Q$ under translation and rotation (if it exists), or more generally, find the largest copy of $P$ inside $Q$ under translation, rotation, and scaling.

Previous algorithms by Chazelle (1983), Sharir and Toledo (1994), and Agarwal, Amenta, and Sharir (1998) all required $\Omega(n^2)$ time, even in the simplest $k=3$ case. We present a significantly faster new algorithm for $k=3$ achieving $O(n\polylog n)$ running time. Moreover, we extend the result for general $k$, achieving $O(k^{O(1/\varepsilon)}n^{1+\varepsilon})$ running time for any  $\varepsilon>0$.

Along the way, we also prove a new $O(k^{O(1)}n\polylog n)$ bound on the number of 
similar copies of $P$ inside $Q$ that have 4 vertices of $P$ in contact with the boundary of $Q$ (assuming general position input), disproving a conjecture by Agarwal, Amenta, and Sharir (1998).
\end{abstract}

\section{Introduction}
\emph{Polygon containment} problems have been studied since the early years of computational geometry
\cite{AgarwalAS,AgarwalAS99,AvBoissonnat,BarequetHarPeled,Chazelle,ChewKedem,DanielsM97,DickersonS96,EomLeeAhn,KunnemannNusser,LeeEomAhn,Milenkovic96,Milenkovic97,Milenkovic,ORourkeSuriTothPOLYGONS,SharirT}.
In this paper, we focus on two of the most fundamental versions of the problem for convex polygons:
\begin{problem}\label{prob}
Given a convex $k$-gon $P$ and a convex $n$-gon $Q$ in $\R^2$, (i)~find a congruent copy of $P$ inside $Q$ (if it exists);
or more generally, (ii)~find the largest similar copy of $P$ inside~$Q$.
\end{problem}

In a congruent copy, we allow translation and rotation; in a similar copy, we allow translation, rotation,
and scaling.
Rotation is what makes the problem challenging, as the corresponding problem without rotation 
can be solved in linear time by a simple reduction to linear programming in 3 variables~\cite{SharirT}.

There were 3 key prior papers on this problem:

\begin{enumerate}
\item
In 1983, Chazelle~\cite{Chazelle} initiated the study of polygon containment
problems and presented an $O(kn^2)$-time algorithm specifically for Problem~\ref{prob}(i).  In particular, an entire section of his
paper was devoted to an $O(n^2)$-time algorithm just for the $k=3$ (triangle) case.

\item 
Sharir and Toledo~\cite{SharirT} (preliminary version in  SoCG'91) applied parametric search~\cite{Megiddo} to reduce
various versions of ``extremal'' polygon containment problems (about finding largest copies) to their corresponding decision problems.
In particular, they described an $O(n^2\log^2n)$-time algorithm for Problem~\ref{prob}(ii) in the $k=3$ case.

\item
In 1998, Agarwal, Amenta, and Sharir~\cite{AgarwalAS} studied Problem~\ref{prob}(ii)
and obtained an $O(kn^2\log n)$-time algorithm. Their approach is to explore the entire solution space.
More precisely, consider a standard 4-parameter representation of similarity transformations~\cite{Baird}: given $(s,t,u,v)\in\R^4$, let $\varphi_{s,t,u,v}:\R^2\rightarrow\R^2$ be the similarity transformation $(x,y)\mapsto 
(sx-ty+u,tx+sy+v)$, which has scaling factor $\sqrt{s^2+t^2}$.  
The region 
\[ \PPP=\{(s,t,u,v)\in\R^4:\ \mbox{$\varphi_{s,t,u,v}(p)\in Q$ for all vertexes $p$ of $P$}\}
\]
describes all feasible solutions and
is an intersection of $O(kn)$ halfspaces in $\R^4$  (since $\varphi_{s,t,u,v}(p)$ is a linear function in the 4
variables $s,t,u,v$ for any fixed point $p$).  The problem is to find a point in $\PPP$ maximizing the convex
function $s^2+t^2$ (the optimum must be located at a vertex). By standard results, a 4-polytope with
$O(kn)$ facets has $O(k^2n^2)$ combinatorial complexity (and can be constructed in $O(k^2n^2)$ time)~\cite{PreparataS}. Agarwal, Amenta, and Sharir improved the combinatorial bound
to $O(kn^2)$ for this particular polytope $\PPP$, enabling them to derive an algorithm with a similar time bound.
\end{enumerate}

Notice that all these previous algorithms have $\Omega(n^2)$ time complexity, even in the triangle ($k=3$) case. (Other quadratic algorithms for $k=3$ have been found, e.g., mostly recently by Lee, Eom, and Ahn~\cite{LeeEomAhn}.)
To explain why, Chazelle \cite{Chazelle} mentioned that there are input convex polygons $Q$ for which the number of different
``stable solutions'' is $\Omega(n^2)$.  (Other authors made similar observations~\cite{LeeEomAhn}.)
More generally, Agarwal, Amenta, and Sharir \cite{AgarwalAS} exhibited a construction of input convex polygons
$P$ and $Q$ for which the polytope $\PPP$ has complexity $\Omega(kn^2)$, matching their
combinatorial upper bound.

Although such combinatorial lower bound results do not technically rule out the possibility of faster algorithms
that find an optimal solution without generating the entire solution space, they indicate that quadratic complexity
is a natural barrier.
In general, no techniques are known to maximize a convex function in an intersection of halfspaces in
constant dimensions with worst-case time better than constructing the entire halfspace intersection.

What motivates us to revisit this topic is the  similarity of the $k=3$ problem to the well-known \emph{3SUM problem} (in the sense that  the main case of the triangle problem is about finding a \emph{triple} of vertices/edges of $Q$ in contact with the triangle $P$).
Our initial thought is to apply the exciting recent advances for 3SUM and related problems~\cite{Aronov23,Barba19,Chan18,Gronlund14,Kane18} to design
decision trees with subquadratic height. This would potentially lead to slightly subquadratic algorithms with running time of the form $n^2/\polylog n$.

Although we believe this line of attack can indeed be applied to Problem~\ref{prob}
in the $k=3$ case, the improvement in the time complexity would be tiny, and generalization to $k>3$ is unclear.
Furthermore, the usage of such heavy machinery might seem premature and unjustified since 
the $k=3$ problem has not been shown to be 3SUM-hard.
Barequet and Har-Peled~\cite{BarequetHarPeled} proved that Problem~\ref{prob}(i) for convex polygons is 3SUM-hard when $k=n$ and so has a near-quadratic conditional lower bound, but for $k=n$, the current upper bound is cubic.
More recently, K\"unnemann and Nusser~\cite{KunnemannNusser} have obtained conditional lower bounds for a number of other polygon containment problems, but not in the convex cases.

\paragraph{New results.}
In this paper, we not only truly break the quadratic barrier but also discover a near-linear, $O(n\polylog n)$-time algorithm for Problem~\ref{prob}(ii) in the $k=3$ case!  This represents a substantial improvement over the previous algorithms from
multiple decades earlier, and directly addresses an open problem posed by Agarwal, Amenta, and Sharir \cite{AgarwalAS} asking for an algorithm faster than $\Theta(kn^2)$ time. (We cannot think of too many classical 2D problems in computational geometry of comparable stature
where quadratic/superquadratic time complexity is reduced to near-linear in a single swoop after a long gap. The closest analog is perhaps
Sharir's breakthrough $O(n\polylog n)$-time algorithm for the 2D Euclidean 2-center problem~\cite{Sharir96} that improved a string of
previous $O(n^2\polylog n)$-time algorithms.)

Furthermore, we generalize our approach and obtain an $O(n^{1+\eps})$-time algorithm for Problem~\ref{prob}(ii) for any constant $k>3$, where $\eps>0$ is an arbitrarily
small constant.  For non-constant $k \le n$, the time bound is $O(k^{O(1/\eps)}n^{1+\eps})$ for any choice of (possibly non-constant) $\eps>0$.
(By choosing $\eps=\sqrt{\log k/\log n}$, the bound can be rewritten as $n2^{O(\sqrt{\log k\log n})}$.)
This beats the previous $O(kn^2)$ bound for all $k < n^\alpha$ for some concrete constant $\alpha>0$.

\paragraph{New approach.}
Although Problem~\ref{prob}(ii) reduces to Problem~\ref{prob}(i) by parametric search~\cite{Megiddo,SharirT} (if one does not mind extra logarithmic factors), we actually find it more convenient to solve Problem~\ref{prob}(ii) directly (i.e., find the largest copy).
The optimal solution must belong to one of the following cases, as observed in previous works (by simple direct arguments, or by
recalling that the optimum corresponds to a vertex of the 4-polytope~$\PPP$):

\begin{itemize}
\item {\bf 2-Contact (i.e., 2-Anchor) Case:} 2 distinct vertices of $P$ are in contact with $\partial Q$, both of which are at 2 vertices of $Q$. These 2 vertices of $P$ are called the 2 ``anchor'' vertices.
\item {\bf 3-Contact (i.e., 1-Anchor) Case:} 3 distinct vertices of $P$ are in contact with $\partial Q$, one at a vertex of $Q$ and the other two
on edges of $Q$. The vertex of $P$ placed at a vertex of $Q$ is referred to as the ``anchor'' vertex.
\item {\bf 4-Contact (i.e., No-Anchor) Case:} 4 distinct vertices of $P$ are in contact with $\partial Q$, all on edges of $Q$.
\end{itemize}

For $k=3$, the main case is the 3-contact case, since it turns out that the 2-contact case can be solved in a similar way (and the 4-contact case, of course, does not arise).
The overall strategy is to divide into sub-problems involving different ``arcs'' (i.e., contiguous pieces) of $\partial Q$.
Our key observation is that under certain conditions about the slopes/angles of the input arcs, all 3-contact feasible solutions
may be covered by just a linear number of pairs of sub-edges, due to \emph{monotonicity} arguments---this is despite the fact that the total number of 3-contact solutions may be quadratic.  In such scenarios,
we can search for the best solution by using standard geometric data structuring techniques (concerning
intersections of ellipses, as it turns out).  A simple binary divide-and-conquer reduces to instances where
such conditions are met, resulting in an $O(n\polylog n)$-time algorithm.


For $k>3$, extending the 3-contact algorithm requires more technical effort (and a slightly increased running time), but
what appears even more challenging is the 4-contact case. The lack of anchor vertices seems to make
everything more complicated (including the needed geometric data structures).
However, with a different strategy,
we show surprisingly that the 4-contact case is easier in the sense that the total number of 4-contact feasible solutions
is actually near-linear in $n$, namely, $O(k^4 n\polylog n)$ (assuming general position input).  Thus, we can afford to enumerate them all!
We prove this combinatorial bound by running our $k=3$ algorithm on different triples of vertices of $P$ and then piecing  information together via further interesting monotonicity arguments.

To see how counterintuitive our near-linear combinatorial bound for 4-contact solutions is,
recall that Agarwal, Amenta, and Sharir~\cite{AgarwalAS} proved an $\Omega(kn^2)$ lower bound on the size
of the solution space.  They noted that their construction only lower-bounded the number of 3-contact solutions,
and at the end of their paper, they asked for another construction with $\Omega(kn^2)$ 4-contact solutions. Our proof answers their question in the negative.



\paragraph{Preliminaries.} 
The \emph{angle} of a line $\LINE{p_1p_2}$, denoted $\theta_{p_1p_2}$, 
refers to the angle measured counterclockwise (ccw) from the $x$-axis to $\LINE{p_1p_2}$.
(Note that $\theta_{p_1p_2}\in [0,\pi)$ and $\theta_{p_2p_1}=\theta_{p_1p_2}$.)
An \emph{arc} $\Gamma$ of a convex polygon $Q$ refers to a contiguous portion of the boundary $\partial Q$ whose supporting lines have
angles in an interval of length $<\pi/3$.
Let $\Lambda(\Gamma)$ (the \emph{angle range} of $\Gamma$) denote the interval containing the angles of all supporting lines of $\Gamma$.
We allow $\Lambda(\Gamma)$ to wrap around (mod $\pi$), so $[a, b]$ indicates $[a,\pi)\cup[0,b]$ if $a > b$.
We assume that no polygons contain parallel adjacent edges, as any such edges can be merged.

We assume that all polygon boundaries and their edges/arcs are oriented in ccw order.
For each edge $e$ of $Q$, let $\ext{e}$ denote its extension as an oriented line (with $Q$ on its ``left'' side).  For an arc $\Gamma$, let $\ext{\Gamma}$ denote an extension of $\Gamma$ where the first and last edge are extended to rays (again oriented with $Q$ on 
its ``left'' side).
We use $\OO$ notation to hide $\polylog n$ factors.

\section{3-Contact (1-Anchor) Case}\label{sec:3contact}

In this section, we solve the 3-contact case, where there is 1 anchor vertex $p_1$ of $P$.  We will first present an algorithm for $k=3$ (when $P$ is a triangle), and then we discuss how to generalize it for $k>3$. 
We will divide into sub-problems operating on different arcs.
For $k=3$, the goal is to find a placement where the anchor $p_1$ is at a vertex on an arc $\Gamma_1$, and
the two other vertices $p_2$ and $p_3$ of $P$ are on edges of arcs $\Gamma_2$ and~$\Gamma_3$.

\subsection{An Easy ``Disjoint'' Case for $k=3$}

We begin with an easy lemma to handle the case when the angle ranges for $\Gamma_2$ and $\Gamma_3$, after suitable rotational shifts, are disjoint.  

\begin{figure}
    \centering    \includegraphics[width=.45\textwidth]{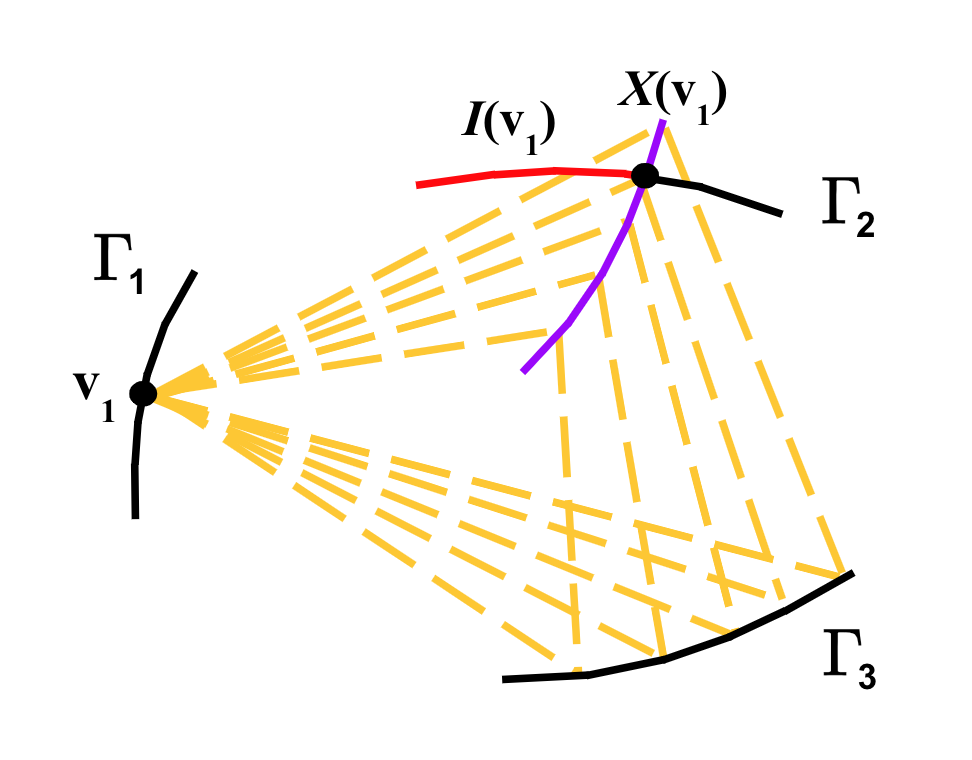}
    \caption{The construction from Lemma~\ref{lem:int}. Each dashed triangle is a similar copy of $\triangle p_1p_2p_3$, and the purple arc is $f_{v_1}(\Gamma_3)$. (We draw $\Gamma_2$ and $\Gamma_3$ instead of $\ext{\Gamma_2}$ and $\ext{\Gamma_3}$ for visual clarity.)}
    \label{fig:int}
\end{figure}

\begin{lemma}\label{lem:int}
Let $\triangle p_1p_2p_3$ be a triangle.
Let $\Gamma_1,\Gamma_2,\Gamma_3$ be arcs of a convex $n$-gon $Q$, 
\LIPICS{s.t.}%
\NORMAL{such that}

\begin{itemize}
\item $\Lambda(\Gamma_2)+\theta_{p_1p_3}$ and $\Lambda(\Gamma_3)+\theta_{p_1p_2}$ are disjoint (mod $\pi$).
\end{itemize}

\noindent
In $\OO(|\Gamma_1|)$ time, we can compute, for each vertex $v_1$ of $\Gamma_1$, a point $X(v_1)$ on $\Gamma_2$ and a sub-arc $\INT(v_1)$ of $\Gamma_2$, satisfying the following property:\footnote{We allow $X(v_1)$ to be undefined and $\INT(v_1)$ to be empty.}
\begin{quote}
For every similarity transformation $\varphi$ that has $\varphi(p_1)$ being a vertex $v_1$ of $\Gamma_1$ and $\varphi(p_2)$ on $\Gamma_2$, we have: (i)~$\varphi(p_3)$ is on $\Gamma_3$ iff $\varphi(p_2)=X(v_1)$, and
(ii)~$\varphi(p_3)$ is left of $\ext{\Gamma_3}$ iff $\varphi(p_2)$ is on $\INT(v_1)$.
\end{quote}
\end{lemma}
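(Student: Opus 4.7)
The plan is to reduce both parts of the conclusion to a one-intersection claim between two nearly-straight convex curves. For each fixed vertex $v_1\in\Gamma_1$, any similarity $\varphi$ with $\varphi(p_1)=v_1$ is uniquely determined by $\varphi(p_2)$; in fact $\varphi(p_3)=f_{v_1}(\varphi(p_2))$, where $f_{v_1}$ is the orientation-preserving similarity fixing $v_1$, rotating by the signed angle $\angle p_2p_1p_3$, and scaling by $|p_1p_3|/|p_1p_2|$. So both conditions become statements about $q:=\varphi(p_2)\in\Gamma_2$: $\varphi(p_3)\in\Gamma_3$ iff $q\in\Gamma_2\cap f_{v_1}^{-1}(\Gamma_3)$, and $\varphi(p_3)$ is left of $\ext{\Gamma_3}$ iff $q$ is on the ``left'' side of $f_{v_1}^{-1}(\ext{\Gamma_3})$. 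Since $f_{v_1}^{-1}$ shifts tangent-line angles by $\theta_{p_1p_2}-\theta_{p_1p_3}$ mod~$\pi$, the lemma's hypothesis translates exactly to $\Lambda(\Gamma_2)$ and $\Lambda(f_{v_1}^{-1}(\ext{\Gamma_3}))$ being disjoint mod~$\pi$.

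\medskip

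The heart of the argument is the claim that any two convex arcs $A, B$ with disjoint angle ranges mod~$\pi$, each of length $<\pi/3$, cross in at most one point. Since $|\Lambda(A)\cup\Lambda(B)|<2\pi/3<\pi$, its complement in $[0,\pi)$ has total length $>\pi/3$, so I can pick a line direction with angle $u$ strictly in a complementary gap and also avoiding the ``vertical-tangent'' forbidden angles $\Lambda(A)+\pi/2$ and $\Lambda(B)+\pi/2$. Rotating the frame so $u$ is the $x$-axis, the tangent slopes of $A$ are all strictly below those of $B$ (or vice versa), and both arcs are non-vertical graphs over the $x$-axis. Writing $A$ as $y=g(x)$ and $B$ as $y=h(x)$, the difference $g-h$ is strictly monotone on the overlap of the two $x$-projections, so $g(x)=h(x)$ has at most one solution. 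Applied with $A=\Gamma_2$ and $B=f_{v_1}^{-1}(\ext{\Gamma_3})$, this gives $|\Gamma_2\cap f_{v_1}^{-1}(\ext{\Gamma_3})|\le 1$. The hard part will be this intersection-count argument: it crucially uses both the disjointness hypothesis and the small arc-width to guarantee that a gap direction with the required slope-ordering property exists, and (up to careful bookkeeping on the mod-$\pi$ conventions) is what forces $X(v_1)$ to be a single point rather than a pair.

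\medskip

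With at-most-one transition in hand, I would define $X(v_1)$ to be this intersection point when it exists and lies on $f_{v_1}^{-1}(\Gamma_3)$ (undefined otherwise --- either there is no crossing, or the crossing falls on an extension ray of $\ext{\Gamma_3}$), and define $\INT(v_1)$ as the portion of $\Gamma_2$ on the ``left'' side of $f_{v_1}^{-1}(\ext{\Gamma_3})$, which the one-intersection bound forces to be a single (possibly empty, possibly full) sub-arc. Properties (i) and (ii) are then immediate from these definitions. For the algorithm, since $\Gamma_2$, $\Gamma_3$, and $\ext{\Gamma_3}$ are already given in sorted order along $\partial Q$, point location against the convex chain $\ext{\Gamma_3}$ takes $O(\log n)$ by binary search. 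For each vertex $v_1$ of $\Gamma_1$ independently, I run a binary search on $\Gamma_2$ whose comparison at each step applies $f_{v_1}$ to a midpoint of $\Gamma_2$ and issues one point-location query on $\ext{\Gamma_3}$; the monotonicity from the one-intersection claim makes the search well-defined, so it converges to $X(v_1)$ (or certifies non-existence) and simultaneously delimits $\INT(v_1)$, in $O(\log^2 n)$ per vertex and $\OO(|\Gamma_1|)$ overall.
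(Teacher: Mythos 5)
Your proposal is correct and follows essentially the same route as the paper: fix $v_1$, transport $\ext{\Gamma_3}$ by the induced similarity, use the disjointness of the (shifted) angle ranges to argue that the transported curve crosses $\Gamma_2$ at most once (the paper states this via a footnote analogy to functions whose derivative ranges lie in disjoint intervals, which your rotated-frame graph argument makes explicit), and then read off $X(v_1)$ and $\INT(v_1)$ and compute them by binary search. One small remark: the only condition actually needed on the direction $u$ is that it avoid $(\Lambda(A)\cup\Lambda(B))+\pi/2$ (a set of measure $<2\pi/3<\pi$, so such a $u$ exists); additionally requiring $u$ to lie in a complementary gap of $\Lambda(A)\cup\Lambda(B)$ is never used afterwards and its joint satisfiability does not follow from your measure count, so you should simply drop that extra requirement.
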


\begin{proof}
Let $f_{v_1}(\zeta)$ be the point $\varphi(p_2)$ for the unique
similarity transformation $\varphi$ with $\varphi(p_1)=v_1$ and $\varphi(p_3)=\zeta$.
In other words, $f_{v_1}$ is a similarity transformation that keeps $v_1$ fixed and sends $p_3$ to $p_2$, i.e., we rotate around $v_1$ by an angle $\theta_{p_1p_2}-\theta_{p_1p_3}+\{0,\pm\pi\}$, and scale by factor $\|p_1-p_2\|/\|p_1-p_3\|$.
Thus, $f_{v_1}(\Gamma_3)$ is a similar copy of $\Gamma_3$.
The supporting lines for $f_{v_1}(\Gamma_3)$ have angles in $\Lambda(\Gamma_3)+\theta_{p_1p_2}-\theta_{p_1p_3}+\{0,\pm\pi\}$, which by the assumption is disjoint from $\Lambda(\Gamma_2)$ (mod $\pi$).
Thus,\footnote{
This is analogous to the fact that if
$f$ and $g$ are functions over $\R$ where the ranges of their
derivatives $f'$ and $g'$ are contained in two disjoint closed intervals, then $f$ and $g$ intersect once.
}
$f_{v_1}(\ext{\Gamma_3})$ and $\ext{\Gamma_2}$ intersect once, at a unique point~$\nu$, which we define as $X(v_1)$, and which can be computed by binary search (see Figure~\ref{fig:int}). 
We can define $\INT(v_1)$ to be a prefix or suffix of $\Gamma_2$ delimited by $X(v_1)$.
\end{proof}

Thus, to solve the 3-contact problem for $k=3$, we can just examine the unique similarity transformation $\varphi$ with $\varphi(p_1)=v_1$ and $\varphi(p_2)=X(v_1)$ for each vertex $v_1$ of $\Gamma_1$, in near-linear total time (assuming the disjointness condition is met). It suffices to consider only this single similarity transformation for each $v_1 \in \Gamma_1$, since this is the only possible 3-contact placement.

\IGNORE{
\begin{lemma}\label{lem:tri:rs}
Let $\triangle p_1p_2p_3$ be a triangle.
Let $\Gamma_1,\Gamma_2,\Gamma_3$ be arcs of a convex $n$-gon $Q$, such that

\begin{enumerate}
\item $\Lambda(\Gamma_2)+\theta_{p_1p_3}$ and $\Lambda(\Gamma_3)+\theta_{p_1p_2}$ are disjoint, modulo $\pi$.
\end{enumerate}

In $\OO(n)$ time, we can find a similarity transformation $\varphi$, maximizing the scaling factor,
such that $\varphi(p_1)$ is a vertex of $\Gamma_1$, and $\varphi(p_2)$ is on $\Gamma_2$, and $\varphi(p_3)$ is on $\Gamma_3$.
\end{lemma}
\begin{proof}
For each vertex $v_1$ of $\Gamma_1$, we just check the similarity transformation $\varphi$ with $\varphi(p_1)=v_1$ and
$\varphi(p_2)=X(v_1)$.
\end{proof}
}

\subsection{A ``Double-Disjoint'' Case for $k=3$}

Next, we address a different case where the angle ranges for $\Gamma_1$ and $\Gamma_2$ are disjoint and the angle
ranges for $\Gamma_1$ and $\Gamma_3$ are disjoint, after appropriate rotational shifts.  The following lemma reveals a crucial monotonicity phenomenon that we will repeatedly exploit.

To state the lemma, we first introduce some definitions:
For two arcs $\Gamma_1$ and $\Gamma_2$, a \emph{pairing} $\MATCH$ between $\Gamma_1$ and $\Gamma_2$ refers to a subdivision of the (straight) edges of
$\Gamma_1$ and $\Gamma_2$ into sub-edges, together with a bijective mapping between the sub-edges of $\Gamma_1$ and the sub-edges of $\Gamma_2$.  For a sub-edge $e_1$ of $\Gamma_1$, we use $\MATCH(e_1)$ to denote $e_1$'s corresponding sub-edge in $\Gamma_2$; similarly, for
a sub-edge $e_2$ of $\Gamma_2$, we use $\MATCH(e_2)$ to denote $e_2$'s corresponding sub-edge in $\Gamma_1$.
We say that the pairing $\MATCH$ is \emph{monotonically increasing (resp.\ decreasing)} if $\MATCH(e_1)$ always advances in ccw (resp.\ cw) order in $\Gamma_2$ as $e_1$ advances in ccw order in $\Gamma_1$ (see Figure~\ref{fig:mon}). 

\begin{figure}
\centering
\includegraphics[width=.55\textwidth]{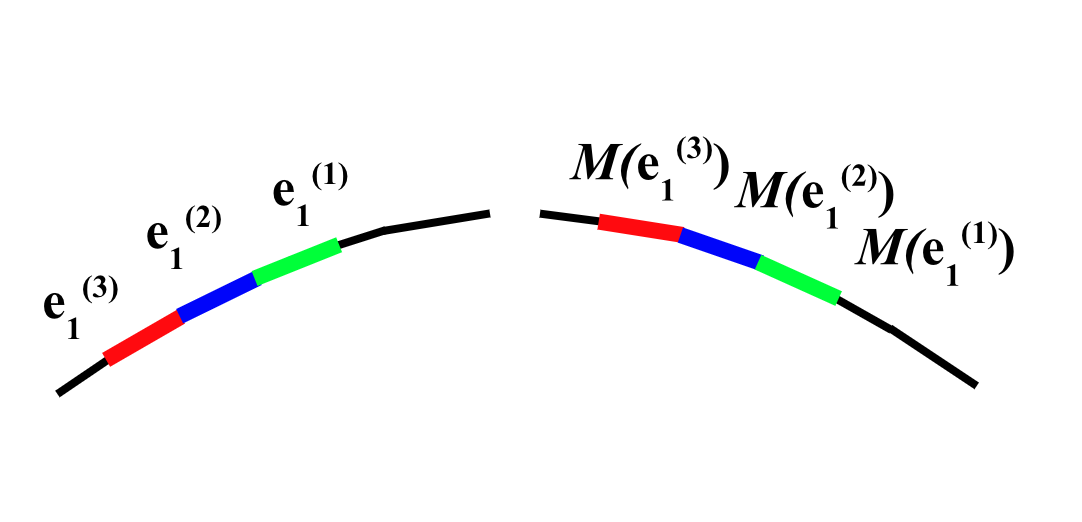}
\hspace{.05\textwidth}
\includegraphics[width=.34\textwidth]{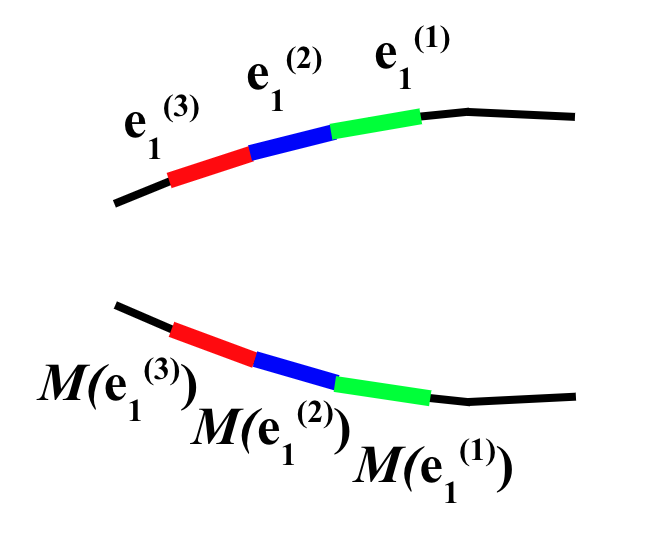}
\caption{(Left) Monotonically increasing pairing. (Right) Monotonically decreasing pairing.}
\label{fig:mon}
\end{figure}

\begin{lemma}\label{lem:tri:match} \emph{(Pairing Lemma)}\ \
Let $\triangle p_1p_2p_3$ be a triangle.
Let $\Gamma_1,\Gamma_2,\Gamma_3$ be arcs of a convex $n$-gon $Q$, such that

\begin{enumerate}
\item $\Lambda(\Gamma_1)+\theta_{p_2p_3}$ and $\Lambda(\Gamma_2)+\theta_{p_1p_3}$ are disjoint (mod $\pi$), and
\item $\Lambda(\Gamma_1)+\theta_{p_2p_3}$ and $\Lambda(\Gamma_3)+\theta_{p_1p_2}$ are disjoint (mod $\pi$).
\end{enumerate}

\noindent
In $\OO(n)$ time, we can compute a (monotonically increasing or decreasing) pairing $\MATCH$ between $\Gamma_2$ and $\Gamma_3$ with $O(n)$ sub-edges, satisfying the following property:
\begin{quote}
For every similarity transformation $\varphi$ that has $\varphi(p_1)$ on $\Gamma_1$ and $\varphi(p_2)$ on a sub-edge $e_2$ of $\Gamma_2$, we have: (i)~$\varphi(p_3)$ is on $\Gamma_3$ iff $\varphi(p_3)$ is on $\MATCH(e_2)$;
and (ii)~$\varphi(p_3)$ is left of $\ext{\Gamma_3}$ iff $\varphi(p_3)$ is left of $\ext{\MATCH(e_2)}$.
\end{quote}
\end{lemma}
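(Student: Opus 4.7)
I would prove Lemma~\ref{lem:tri:match} by invoking Lemma~\ref{lem:int} twice (with different permutations of the triangle vertices and arcs) to generate breakpoints on $\Gamma_2$ and $\Gamma_3$, and then establish global monotonicity of the induced $\Gamma_2\to\Gamma_3$ correspondence via a trigonometric calculation. For the first invocation I relabel $(p_1,p_2,p_3)\mapsto(p_2,p_1,p_3)$ and $(\Gamma_1,\Gamma_2,\Gamma_3)\mapsto(\Gamma_2,\Gamma_1,\Gamma_3)$; Lemma~\ref{lem:int}'s disjointness hypothesis becomes $\Lambda(\Gamma_1)+\theta_{p_2p_3}$ disjoint from $\Lambda(\Gamma_3)+\theta_{p_1p_2}$, which is exactly condition~2. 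For each vertex $v_2$ of $\Gamma_2$ this produces a point $X_A(v_2)\in\ext{\Gamma_1}$, and applying the determined similarity to $p_3$ yields a corresponding $Y_A(v_2)\in\ext{\Gamma_3}$. Symmetrically, with the relabeling $(p_1,p_2,p_3)\mapsto(p_3,p_2,p_1)$ and $(\Gamma_1,\Gamma_2,\Gamma_3)\mapsto(\Gamma_3,\Gamma_2,\Gamma_1)$, the hypothesis reduces to condition~1 and yields, for each vertex $v_3$ of $\Gamma_3$, a point $X_B(v_3)\in\ext{\Gamma_2}$. I then subdivide $\Gamma_2$ at its own vertices together with the $X_B(v_3)$ lying on $\Gamma_2$, and symmetrically subdivide $\Gamma_3$; the pairings $v_2\leftrightarrow Y_A(v_2)$ and $X_B(v_3)\leftrightarrow v_3$ induce a candidate bijection $\MATCH$ with $O(n)$ sub-edges on each side, all in $\OO(n)$ time.

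\textbf{Monotonicity.}
The heart of the argument is the global monotonicity of the $\Gamma_2\to\Gamma_3$ correspondence. Identifying $\R^2$ with $\mathbb{C}$, any similarity $\varphi$ with $\varphi(p_i)=u_i$ satisfies
$$
u_3=(1-\lambda)u_1+\lambda u_2,\qquad \lambda=\frac{p_3-p_1}{p_2-p_1}\in\mathbb{C}.
$$
Consider the feasibility curve $C=\{(u_1,u_2,u_3)\in\ext{\Gamma_1}\times\ext{\Gamma_2}\times\ext{\Gamma_3}:\exists\,\varphi\text{ with }\varphi(p_i)=u_i\}$, parameterized locally by arclengths $t_i$ along the current edges. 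Differentiating the identity gives $dt_3\,e^{i\theta_3}=(1-\lambda)\,dt_1\,e^{i\theta_1}+\lambda\,dt_2\,e^{i\theta_2}$, and separating real and imaginary parts yields, up to an overall constant sign,
$$
\frac{dt_3}{dt_2}\ =\ |\lambda|\,\cdot\,\frac{\sin\!\left([\theta_1+\theta_{p_2p_3}]-[\theta_2+\theta_{p_1p_3}]\right)}{\sin\!\left([\theta_1+\theta_{p_2p_3}]-[\theta_3+\theta_{p_1p_2}]\right)}.
$$
Conditions~1 and~2 state exactly that the numerator and denominator sines, respectively, are bounded away from zero with a single sign across all choices $\theta_i\in\Lambda(\Gamma_i)$. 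Hence $dt_3/dt_2$ has constant sign along all of $C$, proving global monotonicity.

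\textbf{Verification and main obstacle.}
Given monotonicity, the paired breakpoints on $\Gamma_2$ and $\Gamma_3$ sort consistently, so $\MATCH$ is well-defined and for each sub-edge $e_2$ the $u_3$-image traced along $C$ as $u_2$ ranges over $e_2$ is exactly $\MATCH(e_2)$. For any similarity $\varphi$ with $\varphi(p_1)\in\Gamma_1$ and $\varphi(p_2)\in e_2$, the constraint $\varphi(p_3)\in\Gamma_3$ (via $\Gamma_3\subseteq\ext{\Gamma_3}$) forces $(\varphi(p_1),\varphi(p_2),\varphi(p_3))$ to lie on $C$, so $\varphi(p_3)\in\Gamma_3$ iff $\varphi(p_3)\in\MATCH(e_2)$, proving~(i); part~(ii) follows because inside each edge-triple the map $(u_1,u_2)\mapsto u_3$ is linear, so $\varphi(p_3)$ crosses $\ext{\Gamma_3}$ precisely when it crosses $\ext{\MATCH(e_2)}$, and by monotonicity the ``left-of'' sides match globally. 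The main obstacle I foresee is rigorously establishing the global constancy of sign of $dt_3/dt_2$: the sign is clearly constant within any single edge-triple, but one must argue that kinks at vertices of $\ext{\Gamma_1}$ (which alter $\theta_1$ and hence change the formula's value) cannot flip its sign, which is exactly where the use of \emph{both} disjointness hypotheses (as opposed to just one) is essential.
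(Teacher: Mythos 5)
Your construction of the pairing is the same as the paper's (subdivide $\Gamma_2$ at the points matched to vertices of $\Gamma_3$ and vice versa, each found by the uniqueness-of-intersection argument that underlies Lemma~\ref{lem:int}), but your proof of the crucial monotonicity claim is genuinely different. The paper argues topologically: by condition~2 each $\mu\in\ext{\Gamma_2}$ matches a \emph{unique} $\nu\in\ext{\Gamma_3}$, by condition~1 each $\nu$ matches a unique $\mu$, and a continuous bijection between two arcs must be monotone. You instead differentiate $u_3=(1-\lambda)u_1+\lambda u_2$ along the feasibility curve and read off the sign of $dt_3/dt_2$; your formula is correct (I verified it by eliminating $dt_1$), and it buys slightly more than the paper's argument --- it shows quantitatively that conditions~1 and~2 are exactly the conditions making the numerator and denominator nonvanishing, i.e.\ that the curve never stalls in $t_3$ or $t_2$ respectively, which is the same dichotomy the paper uses but made explicit. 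The price is the global sign issue you flag as your ``main obstacle.'' That obstacle is real but resolvable exactly as you suspect, and the missing ingredient is the definition of an arc: each $\Lambda(\Gamma_i)$ has length $<\pi/3$, so if you orient all edges consistently (ccw along $\partial Q$) and use directed edge angles, the quantity $[\theta_1+\theta_{p_2p_3}]-[\theta_2+\theta_{p_1p_3}]$ ranges over an interval of length $<2\pi/3<\pi$ that avoids all multiples of $\pi$ by condition~1; hence the numerator sine keeps one sign across \emph{all} edge-triples, not just within one, and likewise for the denominator via condition~2. Without the bounded angle ranges (or without fixing orientations), ``disjoint mod $\pi$'' alone would not pin down the sign of the sine, so you should make that step explicit. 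With that addition your argument is complete; the paper's bijection argument avoids the computation entirely but tells you less about where the two hypotheses are used.
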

\begin{proof}
We \emph{match} a point $\mu$ on $\ext{\Gamma_2}$ with a point $\nu$ on $\ext{\Gamma_3}$ iff
there exists a similarity transformation $\varphi$ with $\varphi(p_2)=\mu$, $\varphi(p_1)$ on $\ext{\Gamma_1}$, and $\varphi(p_3)=\nu$.

Observe that a point $\mu$ on $\ext{\Gamma_2}$ matches a unique point $\nu$ on $\ext{\Gamma_3}$.
To see this, let $f_\mu(\zeta)$ be the point $\varphi(p_3)$ for the unique
similarity transformation $\varphi$ with $\varphi(p_2)=\mu$ and $\varphi(p_1)=\zeta$.
In other words, $f_\mu$ is the similarity transformation that keeps $\mu$ fixed and sends $p_1$ to $p_3$, i.e., we rotate around $\mu$ by an angle $\theta_{p_2p_3}-\theta_{p_2p_1}+\{0,\pm\pi\}$, and scale by factor $\|p_3-p_2\|/\|p_1-p_2\|$.
Thus, $f_\mu(\Gamma_1)$ is a similar copy of $\Gamma_1$.
The supporting lines for $f_\mu(\Gamma_1)$ have angles in $\Lambda(\Gamma_1)+\theta_{p_2p_3}-\theta_{p_2p_1}+\{0,\pm\pi\}$, which by assumption~2 is disjoint from $\Lambda(\Gamma_3)$ (mod $\pi$).
Thus, $f_\mu(\ext{\Gamma_1})$ and $\ext{\Gamma_3}$ intersect once, namely, at the unique point $\nu$. 
A symmetric argument (swapping subscripts 2 and 3) shows that a point $\nu$ on $\ext{\Gamma_3}$ matches a unique point $\mu$ on $\ext{\Gamma_2}$,
this time, by assumption~1.

Consequently,\footnote{This is analogous to the fact 
 that a continuous bijective function over $\R$ must be monotone.
} as $\mu$ moves along $\ext{\Gamma_2}$, its matching point $\nu$ moves monotonically along $\ext{\Gamma_3}$.
We break an edge at the points $\mu$ on $\Gamma_2$ that match the vertices of $\ext{\Gamma_3}$, which can be found by $n$ binary searches.
Similarly, we break an edge at the points $\nu$ on $\Gamma_3$ that match the vertices of $\ext{\Gamma_3}$, which can be found by $n$ binary searches.
As a result, all points $\mu$ on a common sub-edge $e_2$ of $\ext{\Gamma_2}$ are matched with points on a common sub-edge 
of  $\ext{\Gamma_3}$, which we define as $\MATCH(e_2)$.  For all these points $\mu$, we have $f_\mu(\ext{\Gamma_1})$ intersecting this sub-edge $\MATCH(e_2)$ of $\ext{\Gamma_3}$.
Also, for all $\zeta\in \Gamma_1$, $f_\mu(\zeta)$ is left of $\ext{\Gamma_3}$ iff $f_\mu(\zeta)$ is left of $\ext{\MATCH(e_2)}$ (see Figure~\ref{fig:tri:match}).
\end{proof}

\begin{figure}
    \centering    \includegraphics[width=.45\textwidth]{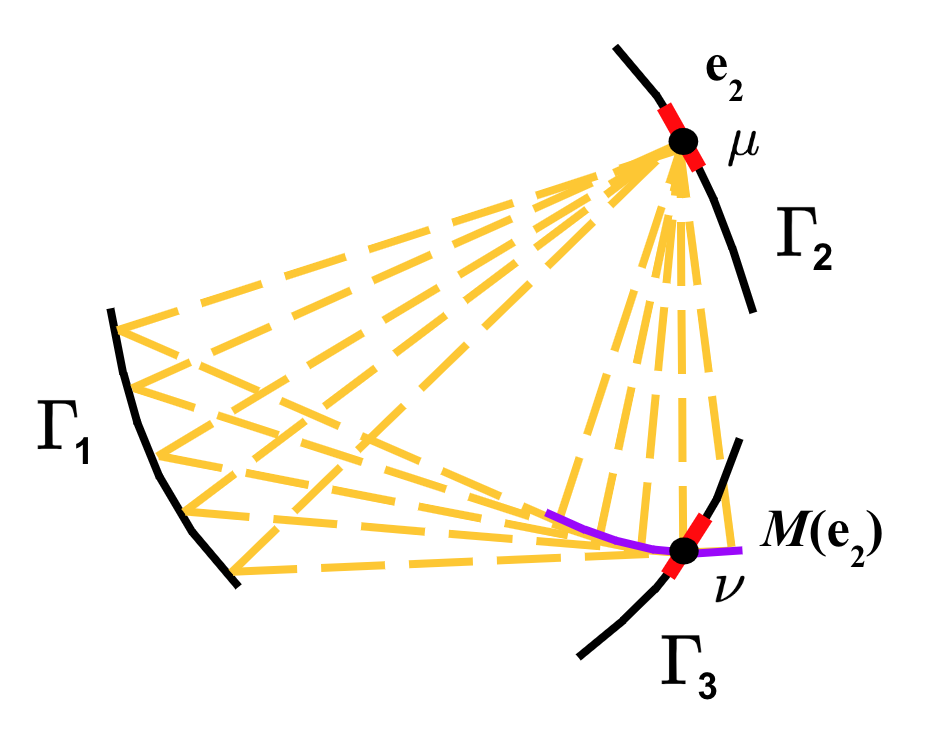}
    \caption{An example of a pairing. Each dashed triangle is a similar copy of $\triangle p_1p_2p_3$.}
    \label{fig:tri:match}
\end{figure}

In the above, we did not claim a monotone pairing between $\Gamma_2$ and $\Gamma_1$, nor between $\Gamma_1$ and $\Gamma_3$. Otherwise, we would get a linear upper bound on the number of 3-contact solutions in this case, which by our subsequent divide-and-conquer algorithm would yield an $O(n\log n)$ bound on the number of 3-contact solutions in general for $k=3$,
contradicting the known quadratic lower bound~\cite{AgarwalAS, LeeEomAhn}! This contradiction does not arise since in the worst case, each matched pair from $\Gamma_2$ and $\Gamma_3$ could admit legal 3-contact placements with \emph{every} vertex of $\Gamma_1$.

With the Pairing Lemma at hand, we can efficiently solve the problem when the two disjointness conditions are met. Specifically, we set up a range searching sub-problem between the $O(n)$ pairs of sub-edges in $\Gamma_2$ and $\Gamma_3$ (the ``data set''), and 
the $O(n)$ vertices of $\Gamma_1$ (the ``query points''). This range searching sub-problem turns out to be near-linear-time solvable:

\begin{lemma}\label{lem:tri:rs}
Let $\triangle p_1p_2p_3$ be a triangle.
Let $\Gamma_1,\Gamma_2,\Gamma_3$ be arcs of a convex $n$-gon $Q$, 
\LIPICS{s.t.}%
\NORMAL{such that}

\begin{enumerate}
\item $\Lambda(\Gamma_1)+\theta_{p_2p_3}$ and $\Lambda(\Gamma_2)+\theta_{p_1p_3}$ are disjoint (mod $\pi$), and
\item $\Lambda(\Gamma_1)+\theta_{p_2p_3}$ and $\Lambda(\Gamma_3)+\theta_{p_1p_2}$ are disjoint (mod $\pi$).
\end{enumerate}

\noindent
In $\OO(n)$ time, we can find a similarity transformation $\varphi$, maximizing the scaling factor,
such that $\varphi(p_1)$ is a vertex of $\Gamma_1$, $\varphi(p_2)$ is on $\Gamma_2$, and $\varphi(p_3)$ is on $\Gamma_3$.
\end{lemma}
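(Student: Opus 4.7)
The plan is to combine the Pairing Lemma (Lemma~\ref{lem:tri:match}) with an efficient geometric data structure. First I would invoke the Pairing Lemma to produce, in $\OO(n)$ time, a monotone pairing $\MATCH$ between $\Gamma_2$ and $\Gamma_3$ with $O(n)$ sub-edge pairs $(e_2, e_3 = \MATCH(e_2))$. By part~(i) of that lemma, every feasible placement $\varphi$ with $\varphi(p_1)\in\Gamma_1$ and $\varphi(p_2)\in e_2$ satisfies $\varphi(p_3)\in\Gamma_3$ iff $\varphi(p_3)\in e_3$, so the optimum arises from one of these $O(n)$ sub-edge pairs together with some vertex $v_1$ of $\Gamma_1$.

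Next I would analyze a single pair $(e_2,e_3)$. Writing $\varphi(z)=\alpha z+\beta$ in complex notation and imposing $\varphi(p_1)=v_1$ gives $\beta=v_1-\alpha p_1$; the remaining line conditions $\varphi(p_2)\in\ext{e_2}$ and $\varphi(p_3)\in\ext{e_3}$ become two real-linear equations in $(\mathrm{Re}\,\alpha,\mathrm{Im}\,\alpha)$ which generically pin $\alpha$ down as an $\mathbb{R}$-affine function of $v_1$. Consequently, the scaling factor $|\alpha|$ is the modulus of an $\mathbb{R}$-affine function of $v_1$---so its squared level sets are ellipses---and the stricter sub-edge conditions $\varphi(p_2)\in e_2$, $\varphi(p_3)\in e_3$ carve out a parallelogram $R_{e_2,e_3}$ in the $v_1$-plane of admissible anchor points. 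The lemma therefore reduces to computing $\max\{f_{e_2,e_3}(v_1):v_1\text{ is a vertex of }\Gamma_1,\ v_1\in R_{e_2,e_3}\}$ over the $O(n)$ pairs (the ``data set'') and the $O(n)$ vertices of $\Gamma_1$ (the ``query points''), exactly matching the setup described after the Pairing Lemma.

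To handle this range-max problem in $\OO(n)$ time I would apply parametric search on the target scale $s$, reducing it to the decision question: for a fixed $s$, does any vertex of $\Gamma_1$ lie in some region $R_{e_2,e_3}\setminus\mathrm{int}\,E_{e_2,e_3}(s)$, where $E_{e_2,e_3}(s)=\{v:f_{e_2,e_3}(v)\le s^2\}$? Each such region is a parallelogram minus an ellipse interior---of constant semialgebraic descriptive complexity---so the batched decision problem should be answerable in $\OO(n)$ total time via a multi-level range-searching data structure specialized to the ``intersection-of-ellipses'' machinery alluded to in the introduction. Parametric search then yields the optimum with only polylogarithmic overhead.

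The step I expect to be the main obstacle is the data-structure substep. The ellipses $E_{e_2,e_3}(s)$ vary in orientation and eccentricity from pair to pair---their shape depends on the directions of both $e_2$ and $e_3$---so no single affine change of coordinates normalizes them all into circles. Achieving genuine near-linear time, rather than the typical $O(n^{4/3+\eps})$ one would get from a generic semialgebraic range-searching reduction, will require exploiting the convexity of $\Gamma_1$ together with the monotone order of the sub-edge pairs produced by the Pairing Lemma; verifying that these structural features suffice for an $\OO(n)$ bound is the most delicate part of the argument.
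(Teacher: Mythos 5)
Your reduction is exactly the paper's: apply the Pairing Lemma to replace the triple $(\Gamma_1,\Gamma_2,\Gamma_3)$ by $O(n)$ matched sub-edge pairs, observe that for a fixed pair the feasible placements form a $2$-dimensional convex polygon in the $4$-dimensional space of similarities, project to the $\varphi(p_1)$-plane to get a region of the form ``convex $O(1)$-gon minus ellipse interior,'' and then run parametric search (the paper also mentions Chan's randomized optimization as an alternative) over a batched decision problem with the $O(n)$ regions as data and the $O(n)$ vertices of $\Gamma_1$ as queries. Up to this point everything you write is correct and matches the paper's proof.

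The genuine gap is the step you yourself flag: you never establish that the decision problem is solvable in $\OO(n)$ time, and the structural features you propose to exploit are not the right ones. The monotone order of the sub-edge pairs plays no role here (it is only needed later, in the $4$-contact analysis). The paper's actual argument has two ingredients. First, for the ellipse level alone, a query point stabs the exterior of some ellipse iff it lies outside the common intersection of all the ellipse interiors; that intersection is a single cell in an arrangement of $O(n)$ constant-degree curves, so by Davenport--Schinzel theory it has near-linear complexity and can be built by divide-and-conquer in $\OO(n)$ time, after which each query is point location in $\OO(1)$ time --- no semialgebraic range searching (and hence no $n^{4/3}$ barrier) is involved. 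Second, the $O(1)$ halfplane constraints per region are folded in by multi-level data structures, and the reason this costs only polylogarithmic overhead rather than triggering genuine halfplane range searching is precisely the convexity of $\Gamma_1$ that you suspected was relevant: a halfplane meets the convex chain $\Gamma_1$ in $O(1)$ contiguous sub-chains, so each halfplane constraint on the query point becomes membership in $O(1)$ intervals of the sorted vertex array of $\Gamma_1$, which a 1D range tree handles at each level. Without these two observations the lemma's $\OO(n)$ bound is not proved, so as written the proposal falls short of the claimed running time.
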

\begin{proof}
Given $(s,t,u,v)\in\R^4$, let $\varphi_{s,t,u,v}:\R^2\rightarrow\R^2$ be the similarity transformation $(x,y)\mapsto 
(sx-ty+u,tx+sy+v)$ (which has scaling factor $\sqrt{s^2+t^2}$).  

Apply Lemma~\ref{lem:tri:match} to get a pairing $M$ between $\Gamma_2$ and $\Gamma_3$.
For each sub-edge $e_2$ of $\Gamma_2$,
define 
\[ \begin{array}{rrl}
\PPP(e_2)\ = &  \{(s,t,u,v)\in\R^4:  & \mbox{$\varphi_{s,t,u,v}(p_2)$ is on $e_2$, and 
$\varphi_{s,t,u,v}(p_3)$ is on $\MATCH(e_2)$}\}\\[1ex]
R_\rho(e_2)\ = & \{\varphi_{s,t,u,v}(p_1): & \mbox{$(s,t,u,v)\in\PPP(e_2)$ and $s^2+t^2\ge\rho^2$}\}.
\end{array}
\]
Observe that $\PPP(e_2)$ is a 2-dimensional convex polygon in $\R^4$ with $O(1)$ edges (since 
$\varphi_{s,t,u,v}(p_2)$ and $\varphi_{s,t,u,v}(p_3)$ are linear in the variables $s,t,u,v$, and the 2 point-on-line-segment conditions yield
2 linear equality constraints and
4 linear inequality constraints in these 4 variables).
Furthermore, $R_\rho(e_2)$ is a region in $\R^2$ which is the intersection of a convex $O(1)$-gon with the
exterior of an ellipse (since $(s,t,u,v)\mapsto \varphi_{s,t,u,v}(p_1)$ is a linear projection from $\R^4$ to $\R^2$, and 
the projection of a 2-dimensional slice of the cylinder $\{(s,t,u,v): s^2+t^2=\rho^2\}$ gives an ellipse).

The decision problem (deciding whether the maximum scaling factor is at least a given value $\rho$) reduces to finding a pair of vertex $v_1$ of $\Gamma_1$ and sub-edge $e_2$ of $\Gamma_2$, such
that $v_1\in R_\rho(e_2)$.
To this end, we will build a data structure to store the $O(n)$ regions $R_\rho(e_2)$ over all $e_2$ so that we can quickly decide whether the query point $v_1$ stabs (i.e., is contained in) some
 region $R_\rho(e_2)$.  

We use standard techniques in geometric data structures.
First, we consider the \emph{range stabbing} problem for exteriors of ellipses: build a data structure for a set of $O(n)$ ellipses in $\R^2$, so that we can quickly decide whether a query point
stabs the exterior of some ellipse, i.e., whether a query point is outside the intersection of the interiors of the ellipses.
The intersection of the interiors of $O(n)$ ellipses (which is a single cell in the arrangement) has almost linear combinatorial complexity by standard results on Davenport-Schinzel sequences~\cite{SharirBOOK},
and can be constructed in $\OO(n)$ time, e.g., by divide-and-conquer.
Thus, this problem can be solved with $\OO(n)$ preprocessing time and $\OO(1)$ query time.

Next, we consider range stabbing for our regions $R_\rho(e_2)$.  As each region is the intersection of
a convex $O(1)$-gon with the exterior of an ellipse, we can use standard \emph{multi-level}
data structuring techniques%
\LIPICS{%
~\cite{AgarwalEricksonSURVEY}%
}%
\NORMAL{%
\footnote{
Multi-level data structures are for solving range-searching-related problems, where in a query, we seek an object satisfying a conjunction of multiple ($O(1)$) constraints.  The idea is to take a tree structure for solving the problem with one constraint, and have each node of the tree recursively store structures for solving the problem with the remaining constraints.  See a general survey on range searching
~\cite{AgarwalEricksonSURVEY} for more details.
}%
}
to handle the extra $O(1)$ halfplane constraints.  Generally, halfplane range searching cannot
be solved with near-linear preprocessing time and polylogarithmic query time.  But in our application, all query points
$v_1$ lie on a convex chain $\Gamma_1$.  The constraint that such a query point $v_1$ lies inside a halfplane
is equivalent to the condition that $v_1$ lies inside one of $O(1)$ 1D intervals, assuming that the vertices of $\Gamma_1$ are stored in a sorted array. 
We can therefore use 1D range
trees~\cite{AgarwalEricksonSURVEY,Lee04,PreparataS} to handle the halfplane constraints, with only a logarithmic factor increase in the preprocessing and query time.

The optimization problem reduces to the decision problem by a standard application of  parametric search~\cite{Megiddo}.
(The application requires a parallelization of the decision algorithm: the preprocessing part, namely, the construction of the intersection of interiors of ellipses, is straightforwardly parallelizable by divide-and-conquer; the $O(n)$ queries can trivially be answered in parallel.)  Parametric search increases the running time by a polylogarithmic factor.
Alternatively, we can apply Chan's randomized optimization technique~\cite{Chan99}, which avoids extra factors. (The application here is straightforward, since the problem can be viewed as a generalized ``closest-pair-type'' problem~\cite{Chan99} between two sets of objects.)  
\end{proof}

\subsection{Simple Divide-and-Conquer Algorithm for $k=3$}

We now have all the ingredients needed to put together a simple recursive algorithm to solve
the $k=3$ problem in the 3-contact case:

\begin{theorem}\label{thm:tri:3contact}
Given a triangle $P$ and a convex $n$-gon $Q$, in $\OO(n)$ time,
we can find the largest similar copy of $P$ contained in $Q$ that has $1$ vertex of $P$ at a vertex of $Q$ and
the $2$ other vertices of $P$ on edges of $Q$.
\end{theorem}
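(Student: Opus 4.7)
The plan is to combine Lemmas~\ref{lem:int} and~\ref{lem:tri:rs} via a binary divide-and-conquer over $\partial Q$. First I would partition $\partial Q$ into $O(1)$ base arcs each with angle range $<\pi/3$ (easy in $O(n)$ time, by walking along the boundary and cutting whenever the angle range would exceed the threshold). Any 3-contact placement of $P$ in $Q$ assigns the anchor vertex $p_1$ and the other two vertices $p_2, p_3$ to three of these arcs, so iterating over the $O(1)$ ordered triples $(\Gamma_1, \Gamma_2, \Gamma_3)$ of base arcs (with $\Gamma_1$ being the anchor arc) covers all cases, and the algorithm returns the best placement across all such triples.

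For each fixed triple, I would run a recursive procedure on the sub-triple $(\Gamma_1, \Gamma_2, \Gamma_3)$. If the two disjointness conditions of Lemma~\ref{lem:tri:rs} hold, invoke it in $\OO(|\Gamma_1|+|\Gamma_2|+|\Gamma_3|)$ time and record the best similarity transformation it returns. Otherwise, if the single condition of Lemma~\ref{lem:int} holds, enumerate the vertices $v_1$ of $\Gamma_1$ and for each test the candidate placement with $\varphi(p_1)=v_1$ and $\varphi(p_2)=X(v_1)$, using $\INT(v_1)$ to verify that $\varphi(p_3)$ lies on $\Gamma_3$; the total cost is $\OO(|\Gamma_1|)$. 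If neither condition holds, binary-split one of the three arcs in half by angle range (chosen so as to break the failing overlap among the shifted angle ranges $\Lambda(\Gamma_i)+\theta_{p_jp_l}$) and recurse on the two resulting sub-triples, taking the better of the two answers.

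Correctness follows because every 3-contact placement lies in a unique leaf of the recursion tree, where one of the lemmas applies --- or, in the extreme case, the arc in question has shrunk to a single edge or vertex and a trivial $O(1)$-time check finishes the job. The main obstacle I expect will be bounding the total work by $\OO(n)$. The recursion depth is $O(\log n)$ because each arc's angle range halves with each split, and once all three angle ranges are sufficiently narrow, one of the two disjointness conditions must hold. With the split arc chosen adaptively based on which disjointness condition fails, I expect only a bounded number of sub-triples at each recursion level to remain in the ``bad'' case, so that most sub-triples at that level are dispatched by a lemma at cost proportional to their arc sizes, with the arc sizes at each level summing to $O(n)$. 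Summing $\OO(n)$ work per level over $O(\log n)$ levels would then yield the claimed $\OO(n)$ total running time. The subtle part is verifying that the split can always be made so that the bad branches do not multiply, so that the recursion tree indeed has near-linear total size; this is the point at which the ``monotonicity'' intuition emphasized in the introduction seems essential.
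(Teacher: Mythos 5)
You have the right ingredients (Lemmas~\ref{lem:int} and~\ref{lem:tri:rs} plus a divide-and-conquer over angle ranges), but the recursion you describe does not close, and the step you flag as ``subtle'' is exactly where it breaks. Splitting \emph{one} of the three arcs and recursing on the two resulting sub-triples leaves the other two arcs intact in both branches, so their edges are duplicated at every level; without a further argument the recursion tree has superlinear total size. Worse, your termination claim --- ``once all three angle ranges are sufficiently narrow, one of the two disjointness conditions must hold'' --- is false: the hypotheses of Lemmas~\ref{lem:int} and~\ref{lem:tri:rs} require the \emph{shifted} ranges $\Lambda(\Gamma_i)+\theta_{p_jp_l}$ to be disjoint as sets, and two arbitrarily narrow ranges can still coincide after shifting; this is precisely the hard configuration, in which quadratically many 3-contact solutions can live. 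Halving by angle range also does not bound the recursion depth by $O(\log n)$ in terms of $n$.

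The paper's fix is to split all three arcs \emph{simultaneously} from a single angle interval $S$: define $\Gamma_1(S),\Gamma_2(S),\Gamma_3(S)$ as the edges whose supporting-line angles lie in $S-\theta_{p_2p_3}$, $S-\theta_{p_1p_3}$, $S-\theta_{p_1p_2}$ respectively, and partition $S$ into $S^-$ and $S^+$ balanced by \emph{total edge count}, $m(S^-),m(S^+)=m(S)/2\pm O(1)$. Then any placement whose three contacts straddle $S^-$ and $S^+$ automatically satisfies the disjointness hypotheses of Lemma~\ref{lem:int} (when $p_2$ and $p_3$ land on opposite sides) or of Lemma~\ref{lem:tri:rs} (when $p_1$ is alone on one side), simply because $S^-$ and $S^+$ are disjoint intervals --- so all ``mixed'' cases are dispatched non-recursively in $\OO(m(S))$ time, and only the two ``pure'' cases (all three contacts in $S^-$, or all in $S^+$) recurse, each on a problem of half the size. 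This yields $T(m)\le 2\,T(m/2+O(1))+\OO(m)=\OO(m)$. To repair your argument you would need to replace the adaptive single-arc split with this coupled split and the accompanying case analysis.
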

\begin{proof}
Let $P=\triangle p_1p_2p_3$.
Arbitrarily divide $\partial Q$ into $O(1)$ arcs, and let $\Gamma_1,\Gamma_2,\Gamma_3$ be 3 such arcs (allowing duplicates).  We will try all $O(1)$ choices of $\Gamma_1,\Gamma_2,\Gamma_3$.

Let $S$ be an interval.
Let $\Gamma_1(S)$ (resp.\ $\Gamma_2(S)$ and $\Gamma_3(S)$) be the sub-arc of $\Gamma_1$
(resp.\ $\Gamma_2$ and $\Gamma_3$) consisting of all edges whose supporting lines have angles
in $S-\theta_{p_2p_3}$  (resp.\ $S-\theta_{p_1p_3}$ and $S-\theta_{p_1p_2}$) (mod $\pi$).
We will recursively solve the following problem: find a similarity transformation $\varphi$, maximizing the scaling factor, such that 
$\varphi(p_1)$ is a vertex $v_1$ of $\Gamma_1(S)$, $\varphi(p_2)$ is on $\Gamma_2(S)$, and $\varphi(p_3)$ is on $\Gamma_3(S)$.

As a first step, we remove edges not participating in $\Gamma_1(S),\Gamma_2(S),\Gamma_3(S)$, so that the number of edges in $Q$ is reduced to $m(S):=|\Gamma_1(S)|+|\Gamma_2(S)|+|\Gamma_3(S)|$.
Partition $S$ into sub-intervals $S^-$ and $S^+$ (as shown in Figure~\ref{fig:tri:3contact}) so that $m(S^-),m(S^+) = m(S)/2 \pm O(1)$. 
We try various possibilities and take the best solution found:

\begin{itemize}
\item Case 1: $\varphi(p_1)$ is on $\Gamma_1(S^-)$, $\varphi(p_2)$ is on $\Gamma_2(S^-)$, and $\varphi(p_3)$ is on $\Gamma_3(S^-)$.
We can recursively solve the problem for $S^-$.
\item Case 2: $\varphi(p_1)$ is on $\Gamma_1(S^+)$, $\varphi(p_2)$ is on $\Gamma_2(S^+)$, and $\varphi(p_3)$ is on $\Gamma_3(S^+)$.
We can recursively solve the problem for $S^+$.
\item Case 3: $\varphi(p_1)$ is on $\Gamma_1(S^-)$, $\varphi(p_2)$ is on $\Gamma_2(S^+)$, and $\varphi(p_3)$ is on $\Gamma_3(S^+)$.
Since $\Lambda(\Gamma_1(S^-))+\theta_{p_2p_3}\subseteq S^-$ and $\Lambda(\Gamma_2(S^+))+\theta_{p_1p_3}\subseteq S^+$ are disjoint (mod $\pi$),
and $\Lambda(\Gamma_1(S^-))+\theta_{p_2p_3}\subseteq S^-$ and $\Lambda(\Gamma_3(S^+))+\theta_{p_1p_2}\subseteq S^+$ are disjoint (mod $\pi$),
we can solve this sub-problem by Lemmas \ref{lem:tri:match}--\ref{lem:tri:rs} in $\OO(m(S))$ time.
\item Case 4: $\varphi(p_1)$ is on $\Gamma_1(S^-)$, $\varphi(p_2)$ is on $\Gamma_2(S^-)$, and $\varphi(p_3)$ is on $\Gamma_3(S^+)$.
Since $\Lambda(\Gamma_2(S^-))+\theta_{p_1p_3}\subseteq S^-$ and $\Lambda(\Gamma_3(S^+))+\theta_{p_1p_2}\subseteq S^+$ are disjoint (mod $\pi$),
we can solve this sub-problem by Lemma~\ref{lem:int} in $\OO(m(S))$ time. Namely, for each vertex $v_1$ of $\Gamma_1(S^-)$, we just check 
the unique similarity transformation $\varphi$ with $\varphi(p_1)=v_1$ and $\varphi(p_2)=X(v_1)$.
\end{itemize}

\noindent
All remaining cases are symmetric to Cases 3 and 4
(swapping subscripts 2 and 3 and/or $S^-$ and $S^+$).

Letting $m=m(S)$, we obtain the following recurrence for the running time:
\[ T(m)\: \le\: 2\,T(m/2 + O(1)) + \OO(m).
\]
The recurrence solves to $T(m)=\OO(m)$.
\end{proof}

It is not difficult to modify the algorithm to also solve the 2-contact (2-anchor) case, as shown in 
\LIPICS{the full paper.}%
\NORMAL{Appendix~\ref{app:2contact}.}
This gives a complete $\OO(n)$ time algorithm for $k=3$.

\begin{figure}
    \centering    \includegraphics[width=.48\textwidth]{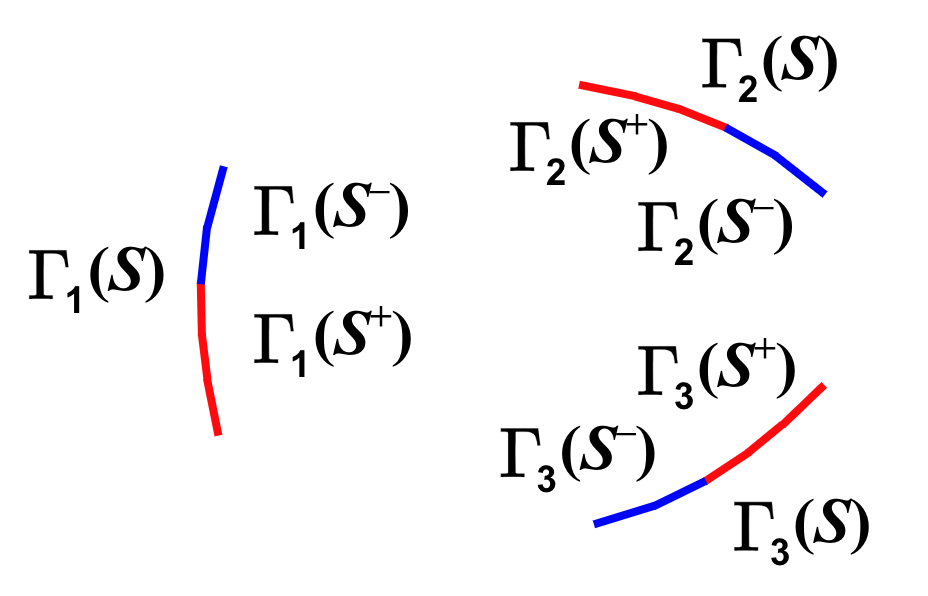}
    \caption{Partitioning each $\Gamma_i(S)$ into $\Gamma_i(S^+)$ and $\Gamma_i(S^-)$, for $i \in \{1,2,3\}$.}
    \label{fig:tri:3contact}
\end{figure}
 
\subsection{Generalizing to $k>3$}

With further effort, we can also solve the problem for general $k$ in the 3-contact case.
Say $p_1$ is the anchor vertex, and $p_2$ and $p_3$ are the other two vertices of $P$ in contact with $Q$.
The idea is to just apply the Pairing Lemma to the triangles $\triangle p_1p_2p_i$ for all other vertices $p_i$ of $P$,
assuming appropriate disjointness conditions.
We need to extend the problem to equip each vertex $v_1$ of $\Gamma_1$ with a sub-arc $\INT(v_1)$ of $\Gamma_2$ which restricts
the placement of~$p_2$.
The resulting range searching sub-problems can be solved in a manner similar to the triangle case, which we show in the following extension of Lemma~\ref{lem:tri:rs}%
\LIPICS{ (see the full paper for the proof)}:

\begin{lemma}\label{lem:kgon:rs}
Let $P$ be a $k$-gon with vertices $p_1,\ldots,p_k$ (not necessarily in sorted order).
Let $\Gamma_1,\ldots,\Gamma_k$ be arcs of a convex $n$-gon $Q$, such that

\begin{enumerate}
\item for each $i\in\{3,\ldots,k\}$, $\Lambda(\Gamma_1)+\theta_{p_2p_i}$ and
$\Lambda(\Gamma_2)+\theta_{p_1p_i}$ are disjoint (mod $\pi$), and
\item for each $i\in\{3,\ldots,k\}$, $\Lambda(\Gamma_1)+\theta_{p_2p_i}$ and
$\Lambda(\Gamma_i)+\theta_{p_1p_2}$ are disjoint (mod $\pi$).
\end{enumerate}

\noindent
For each vertex $v_1$ of $\Gamma_1$, we are given a sub-arc $\INT(v_1)$ of $\Gamma_2$. 
In $\OO(k^2n)$ time, we can find a similarity transformation $\varphi$, maximizing the scaling factor,
such that $\varphi(p_1)$ is a vertex $v_1$ of $\Gamma_1$, $\varphi(p_2)$ is on $\INT(v_1)\subseteq \Gamma_2$, $\varphi(p_3)$ is on $\Gamma_3$, and for each $i\in\{4,\ldots,k\}$, $\varphi(p_i)$ is left of $\ext{\Gamma_i}$.
\end{lemma}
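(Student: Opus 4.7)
The plan is to reduce to the triangle setting by applying the Pairing Lemma (Lemma~\ref{lem:tri:match}) to each triangle $\triangle p_1 p_2 p_i$ for $i \in \{3,\ldots,k\}$. The two disjointness conditions in the hypothesis are exactly the two required by the Pairing Lemma for each such triangle, so this yields $k-2$ pairings $M_i$ between $\Gamma_2$ and $\Gamma_i$, each with $O(n)$ sub-edges and computed in $\OO(n)$ time. Taking the common refinement along $\Gamma_2$ yields a family $E$ of $O(kn)$ sub-edges of $\Gamma_2$, and for each $e_2 \in E$ we get a matched sub-edge $M_i(e_2)\subseteq \Gamma_i$ for every $i\geq 3$ simultaneously.

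Mirroring the proof of Lemma~\ref{lem:tri:rs}, for each $e_2 \in E$ I would define $\PPP(e_2)\subseteq \R^4$ as the set of $(s,t,u,v)$ satisfying $\varphi_{s,t,u,v}(p_2) \in e_2$, $\varphi_{s,t,u,v}(p_3) \in M_3(e_2)$, and $\varphi_{s,t,u,v}(p_i)$ being left of $\ext{M_i(e_2)}$ for every $i\geq 4$. Let $R_\rho(e_2)\subseteq \R^2$ be the image of $\PPP(e_2)\cap\{s^2+t^2\geq\rho^2\}$ under $(s,t,u,v)\mapsto \varphi_{s,t,u,v}(p_1)$. As in the triangle case, $\PPP(e_2)$ is a $2$-dimensional convex polygon in $\R^4$, but now with $O(k)$ edges, so $R_\rho(e_2)$ is the intersection of a convex $O(k)$-gon with the exterior of an ellipse in~$\R^2$.

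The main obstacle I expect is handling the $O(k)$ halfplane constraints per $e_2$ without paying a $\log^{O(k)} n$ blow-up from a $k$-level range tree. The key is that the query points $v_1$ all lie on the convex chain $\Gamma_1$, so each halfplane bounding the $O(k)$-gon part of $R_\rho(e_2)$ restricts $v_1$ to a constant number of intervals of $\Gamma_1$. Intersecting these $O(k)$ interval constraints per $e_2$ collapses them into a single ``validity'' interval $J(e_2)\subseteq \Gamma_1$, which can be computed via $O(k)$ binary searches in $O(k\log n)$ time per sub-edge, and $\OO(k^2 n)$ overall. The auxiliary constraint $\varphi(p_2)\in\INT(v_1)$ is handled by recording each $e_2$'s position in $\Gamma_2$ and turning ``$e_2\cap \INT(v_1)\neq\emptyset$'' into an additional $1$D range constraint indexed by~$v_1$.

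Finally, I would build a multi-level data structure: an outer segment tree over $\Gamma_2$-positions (for the $\INT(v_1)$ constraint), within each canonical node an interval tree over $\{J(e_2)\}$ (for the $O(k)$-gon constraint, now one interval per $e_2$), and within each of those canonical nodes the $\OO(n')$-preprocessing, $\OO(1)$-query ellipse-exterior single-cell structure from Lemma~\ref{lem:tri:rs}. Each level adds only a polylog factor, so the total preprocessing is $\OO(k^2 n)$ (dominated by the interval computation), and the $O(n)$ point-stabbing queries with $v_1\in\Gamma_1$ cost $\OO(n)$ altogether. The optimization over $\rho$ reduces to the decision problem by parametric search or Chan's randomized technique (exactly as in Lemma~\ref{lem:tri:rs}), adding only polylog overhead, for a total running time of $\OO(k^2 n)$.
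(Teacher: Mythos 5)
Your overall route is the paper's: apply the Pairing Lemma to each $\triangle p_1p_2p_i$, overlay the resulting subdivisions along $\Gamma_2$ into $O(kn)$ sub-edges, form $\PPP(e_2)$ and $R_\rho(e_2)$ exactly as in Lemma~\ref{lem:tri:rs} (now a convex $O(k)$-gon intersected with an ellipse exterior), answer the stabbing queries with a multi-level structure exploiting that the query points lie on the convex chain $\Gamma_1$, and finish with parametric/randomized search. Two details are off, though. First, the claim that the $O(k)$ halfplane constraints ``collapse into a single validity interval $J(e_2)$'' is false in general: a single halfplane already restricts $v_1$ to a union of up to \emph{two} sub-arcs of the convex chain $\Gamma_1$ (its bounding line can cross $\Gamma_1$ twice), so the intersection of $O(k)$ such constraints can be a union of $\Theta(k)$ sub-arcs. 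This does not break the complexity --- you can store $O(k)$ intervals per region, or do what the paper does and decompose each $R_\rho(e_2)$ into $O(k)$ sub-regions each bounded by $O(1)$ halfplanes --- but as written the single-interval tree level would miss valid answers.

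Second, your encoding of the constraint $\varphi(p_2)\in\INT(v_1)$ as ``$e_2\cap\INT(v_1)\neq\emptyset$'' admits false positives: if $e_2$ only partially overlaps $\INT(v_1)$, a reported pair $(v_1,e_2)$ with $v_1\in R_\rho(e_2)$ only certifies $\varphi(p_2)\in e_2$, and the witness may have $\varphi(p_2)\in e_2\setminus\INT(v_1)$, so the decision algorithm could answer yes incorrectly. The paper avoids this by additionally subdividing $\Gamma_2$ at the endpoints of $\INT(v_1)$ for every vertex $v_1$ of $\Gamma_1$ (adding only $O(n)$ sub-edges), after which each sub-edge is entirely inside or outside each $\INT(v_1)$ and the condition becomes genuine containment, i.e., a 1D range condition on $e_2$'s position. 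With these two repairs your argument matches the paper's proof and the $\OO(k^2n)$ bound stands.
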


\NORMAL{
\begin{proof}
We generalize the proof of Lemma~\ref{lem:tri:rs}.
For each $i\in\{3,\ldots,k\}$, we apply Lemma~\ref{lem:int} to $\triangle p_1p_2p_i,\Gamma_1,\Gamma_2,\Gamma_i$, to
obtain a pairing $\MATCH_{i}$ between $\Gamma_2$ and $\Gamma_i$.
We then overlay the $O(k)$ subdivisions of $\Gamma_2$. In doing so, the mapping $M_{i}$ from sub-edges of $\Gamma_2$ to sub-edges of $\Gamma_i$
may not be bijective, but this is fine. The number of sub-edges is $n'=O(kn)$. 
We further subdivide $\Gamma_2$ at the endpoints of the sub-arcs $\INT(v_1)$ for every vertex $v_1$ of $\Gamma_1$. As a consequence, the following property holds:
\begin{quote}
For every similarity transformation $\varphi$ that has $\varphi(p_1)$ being a vertex $v_1$ of $\Gamma_1$ and $\varphi(p_2)$ on a sub-edge $e_2$ of $\Gamma_2$, we have: (i)~$\varphi(p_3)$ is on $\Gamma_3$ iff $\varphi(p_3)$ is on $\MATCH_{3}(e_2)$, and (ii)~for each $i\in\{4,\ldots,k\}$, $\varphi(p_i)$ is left of $\ext{\Gamma_i}$ iff $\varphi(p_i)$ is left of $\ext{\MATCH_{i}(e_2)}$.
\end{quote}

For each sub-edge $e_2$ of $\Gamma_2$,
define 
\[ \begin{array}{rrl}
\PPP(e_2)\ = & \{(s,t,u,v)\in\R^4:  & \mbox{$\varphi_{s,t,u,v}(p_2)$ is on $e_2$, and $\varphi_{s,t,u,v}(p_3)$ is on $\MATCH_{3}(e_2)$, and} \\[.5ex]
 &&\mbox{$\varphi_{s,t,u,v}(p_i)$ is left of $\ext{\MATCH_{i}(e_2)}$ for all $i\in\{4,\ldots,k\}$}\}\\[1ex]
 R_\rho(e_2)\ = & \{\varphi_{s,t,u,v}(p_1): & \mbox{$(s,t,u,v)\in\PPP(e_2)$ and $s^2+t^2\ge\rho^2$}\}.
\end{array}
\]
Similar to before, $\PPP(e_2)$ is a 2-dimensional polygon in $\R^4$, with $O(k)$ edges, and $R_\rho(e_2)$ is a region in $\R^2$ which is the intersection of a convex $O(k)$-gon with the
exterior of an ellipse. 

The decision problem reduces to finding a pair of vertex $v_1$ of $\Gamma_1$ and sub-edge $e_2$ of $\Gamma_2$, such
that $v_1\in R_\rho(e_2)$ and $e_2$ is in $\INT(v_1)$.
To this end, we will build a data structure to store the $O(n')$ regions $R_\rho(e_2)$ over all $e_2$ so that, given a query point $v_1$ and a query sub-arc~$I$, we can quickly decide whether the query point $v_1$ stabs some
 region $R_\rho(e_2)$ with $e_2$ in the query sub-arc~$I$.  

Each such region can be decomposed into $O(k)$ sub-regions, where
each sub-region is the intersection of the exterior of an ellipse with $O(1)$ halfplanes (the preprocessing time is increased by an $O(k)$ factor).  As before, we can use ellipse range stabbing combined with multi-level data structuring techniques (1D range tree).
The constraint that $e_2$ is in the query sub-arc $I$ (which can be represented as a 1D interval) can again be handled
by multi-leveling, with another level of 1D range trees.  The preprocessing time is $\OO(kn')$ and the query time
is $\OO(1)$.  The optimization problem reduces to the decision problem by standard
parametric search (or randomized search) as before.  
\end{proof}
}

We now give a slightly more intricate divide-and-conquer algorithm for general $k$:

\begin{theorem}\label{thm:kgon:3contact}
Given a $k$-gon $P$ and a convex $n$-gon $Q$ (where $k \le n$),
we can find the largest similar copy of $P$ contained in $Q$ that has $1$ vertex of $P$ at a vertex of $Q$ and
the $2$ other vertices of $P$ on $2$ edges of $Q$,
in $O(k^{O(1/\eps)}n^{1+\eps})$ time for any $\eps>0$.  
\end{theorem}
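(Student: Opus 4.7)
The plan is to mirror Theorem~\ref{thm:tri:3contact}'s divide-and-conquer scheme, with Lemma~\ref{lem:kgon:rs} taking the place of Lemmas~\ref{lem:int} and~\ref{lem:tri:match}. I would first iterate over the $O(k^3)$ ways of choosing the anchor vertex $p_1$ of $P$ and the two contact vertices $p_2,p_3$. For each such assignment of roles, I parametrize a sub-problem by an angular interval $S$ of admissible rotations, and for every $i \in \{1,\ldots,k\}$ define $\Gamma_i(S)$ analogously to Theorem~\ref{thm:tri:3contact}---as the sub-arc of $\partial Q$ whose edge-supporting-line angles lie in $S$ shifted by the relevant offset (for $i \in \{1,2,3\}$, the shifts are $\theta_{p_2p_3}$, $\theta_{p_1p_3}$, $\theta_{p_1p_2}$, and for $i\ge 4$ the shift is dictated by the triangle $\triangle p_1p_2p_i$ that Lemma~\ref{lem:kgon:rs} uses internally). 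The recursive sub-problem is to maximize the scale of $\varphi$ over placements with $\varphi(p_1)$ at a vertex of $\Gamma_1(S)$, $\varphi(p_2)$ on $\Gamma_2(S)$, $\varphi(p_3)$ on $\Gamma_3(S)$, and $\varphi(p_i)$ left of $\ext{\Gamma_i(S)}$ for every $i\ge 4$.

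To carry out the recursion, I would split $S$ into $b$ equal sub-intervals $S_1,\ldots,S_b$ for a parameter $b$ chosen in terms of $\eps$. Every placement distributes the vertices of $P$ across the pieces, and I would group these distributions into two regimes. (i) All $k$ vertices' arcs lie in a common piece $S_j$: recurse on $S_j$, yielding $b$ recursive sub-calls on intervals of length $|S|/b$. (ii) The anchor arc $\Gamma_1(S)$ sits in a piece $S_j$ different from at least one non-anchor vertex's piece: for the non-anchor vertices whose pieces are distinct from $S_j$, the shift-offsets in our definition of $\Gamma_i(S)$ immediately yield the disjointness hypotheses that Lemma~\ref{lem:kgon:rs} requires, so these vertices can be absorbed into a single invocation of Lemma~\ref{lem:kgon:rs} per anchor piece without further branching.

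The main obstacle is a potential $2^{k-1}$-way combinatorial explosion inside regime (ii), over which subset of non-anchor vertices ends up stuck in the anchor's piece: directly enumerating these subsets would contribute $k^{\Omega(k)}$ branches and destroy any polynomial-in-$k$ factor. The way I would sidestep this is, first, to exploit the flexibility of Lemma~\ref{lem:kgon:rs}---which already handles any number of non-anchor vertices via their ``left-of~$\ext{\Gamma_i}$'' constraints---so that only the vertices genuinely stuck in the anchor's piece need to be pushed into recursion, and second, to tune the branching factor $b$ larger than $2$ so that each non-anchor vertex is, with high ``pieceness,'' separated from the anchor. Balancing $b$ against a recursion depth of $\ell = O(1/\eps)$ (so that the $\ell$-fold refinement collapses the angular range fully), the resulting recurrence of the form $T(m)\le b\cdot T(m/b) + \widetilde{O}(k^{O(1)} m)$ plus the leaf-level applications of Lemma~\ref{lem:kgon:rs} at cost $\widetilde{O}(k^2 n)$ each aggregates to $O(k^{O(1/\eps)} n^{1+\eps})$ after multiplying by the $O(k^3)$ outer iterations.
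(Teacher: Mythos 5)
Your plan runs into the exact obstruction the paper flags after its proof of this theorem: a single angular interval $S$, with one rotational shift per arc, cannot simultaneously encode the disjointness conditions for all $k-2$ triangles $\triangle p_1p_2p_i$. Lemma~\ref{lem:kgon:rs} needs, for each $i\ge 3$, that $\Lambda(\Gamma_1)+\theta_{p_2p_i}$ avoid $\Lambda(\Gamma_2)+\theta_{p_1p_i}$; the required shift of $\Gamma_1$ is $\theta_{p_2p_i}$ and of $\Gamma_2$ is $\theta_{p_1p_i}$, and these \emph{both} vary with $i$. If you fix $\Gamma_1(S)$ and $\Gamma_2(S)$ using the shifts for $i=3$, then putting $\Gamma_1$ in piece $S_j$ and $\Gamma_2$ in piece $S_{j'}$ with $j\ne j'$ gives you disjointness only for the triangle $\triangle p_1p_2p_3$; for $i=4$ the relevant intervals are $S_j+\theta_{p_2p_4}-\theta_{p_2p_3}$ and $S_{j'}+\theta_{p_1p_4}-\theta_{p_1p_3}$, which are translated by unrelated amounts and can overlap. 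So your regime (ii) claim that ``the shift-offsets \ldots immediately yield the disjointness hypotheses'' is false, and your appeal to a larger branching factor $b$ and ``pieceness'' does not repair it --- there is no choice of $b$ that makes $k-2$ incompatible shift systems align. A telltale sign is your recurrence $T(m)\le b\,T(m/b)+\OO(k^{O(1)}m)$, which would solve to near-linear time and contradict the theorem's weaker $n^{1+\eps}$ bound.

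The paper's actual mechanism is different in two ways you are missing. First, it partitions the \emph{arcs} $\Gamma_1$ and $\Gamma_2$ themselves into $r$ sub-arcs each (not an angular interval), and for each $i$ observes that among the $r^2$ pairs $(\gamma_1,\gamma_2)$ only $O(r)$ have $\Lambda(\gamma_1)+\theta_{p_2p_i}$ intersecting $\Lambda(\gamma_2)+\theta_{p_1p_i}$ (overlaying two interval subdivisions), so only $O(kr)$ pairs need to recurse; all other pairs satisfy every first-type disjointness condition and go to Lemma~\ref{lem:kgon:rs}. Second, the second-type conditions (involving $\Lambda(\Gamma_i)$ itself) are enforced by \emph{clipping}: for each $\gamma_2$ one restricts $\Gamma_i$ to the sub-arc whose angles lie in $\Lambda(\gamma_2)+\theta_{p_1p_i}-\theta_{p_1p_2}$, and handles the complementary pieces of $\Gamma_3$ and the $I'(v_1)$ restrictions via Lemma~\ref{lem:int}. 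This yields the recurrence $T(\widehat m)\le O(kr)\,T(\widehat m/r+O(k))+\OO(k^2r^2(\widehat m/r+k))$, whose $O(kr)$ branching is the source of the $k^{O(1/\eps)}$ factor. Your proposal contains neither the $O(r)$-bad-pairs counting argument nor the clipping step, and without them the decomposition does not go through.
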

\begin{proof}
Suppose the vertices of $P$ are $p_1,\ldots,p_k$ (not necessarily in sorted order).
Divide $\partial Q$ into $O(1)$ arcs, and let $\Gamma_1,\Gamma_2,\Gamma_3$ be 3 such arcs (allowing duplicates).
We will try all choices for $p_1,p_2,p_3$ and $\Gamma_1,\Gamma_2,\Gamma_3$; this increases the final running time by a factor of $O(k^3)$.

Let $\Gamma_4, \ldots, \Gamma_{k}$ be arcs of $\partial Q$, so that a similarity transformation $\varphi$ has $\varphi(P)$ inside $Q$ iff $\varphi(p_i)$ is left of $\ext{\Gamma_i}$ for all $i \in \{1,\dots k\}$. This is w.l.o.g.\ since we can just make $O(1)$ copies of $p_4,\ldots,p_k$ and associate each copy with an arc of $\partial Q$, while increasing $k$ by a constant factor. Note that duplicate arcs are allowed, and  some of these arcs may even be the same as $\Gamma_1$, $\Gamma_2$, or $\Gamma_3$.

We will describe a recursive algorithm, where the input consists of $k$ arcs
$\langle\Gamma_1,\ldots,\Gamma_k\rangle$ together with
a sub-arc $\INT(v_1)\subseteq\Gamma_2$ for every vertex $v_1$ of $\Gamma_1$.  (For the initial problem, $\INT(v_1)$ will be all of $\Gamma_2$ for all $v_1$.)  Our algorithm will find a similarity transformation $\varphi$, maximizing the scaling factor, such that $\varphi(p_1)$ is a vertex $v_1$ of $\Gamma_1$, $\varphi(p_2)$ is on $\INT(v_1)$, $\varphi(p_3)$ is on $\Gamma_3$, and for all $i\in\{4,\ldots,k\}$, $\varphi(p_i)$ is left of 
$\ext{\Gamma_i}$.


\newcommand{\CLIP}{\textrm{clip}}

To this end,
let $\CLIP_i(\Gamma_i,\Gamma_2)$ be the sub-arc of $\Gamma_i$ consisting of all edges of $\Gamma_i$ whose supporting lines have angles in $\Lambda(\Gamma_2)+\theta_{p_1p_i}-\theta_{p_1p_2}$ (mod $\pi$).
Let $m(\Gamma_2)=\sum_{i=3}^k|\CLIP_i(\Gamma_i,\Gamma_2)| + |\Gamma_2|$.
Partition $\Gamma_1$ into $r$ sub-arcs such that each sub-arc $\gamma_1$ has $|\Gamma_1|/r \pm O(1)$ edges.
Partition $\Gamma_2$ into $r$ sub-arcs such that each sub-arc $\gamma_2$ has $m(\gamma_2) = m(\Gamma_2)/r \pm O(k)$. (This is possible because for a single edge $e_2$ of $\gamma_2$, $m(e_2) \le O(k)$.)

By Lemma~\ref{lem:int}, we first enumerate all similarity transformations $\varphi$ with $\varphi(p_1)$ on a vertex of $\Gamma_1$, $\varphi(p_2)$ on some sub-arc $\gamma_2$, and $\varphi(p_3)$ on each of the $O(1)$ contiguous pieces of $\Gamma_3 \setminus \CLIP_3(\Gamma_3, \gamma_2)$; the disjointness condition in Lemma~\ref{lem:int} is satisfied by our definition of $\CLIP_i$ (see Figure~\ref{fig:kgon:3contact}).
This gives $O(|\Gamma_1|)$ transformations to check per $\gamma_2$, which requires $\OO(|\Gamma_1|\cdot rk)$ time over all $\gamma_2$ (checking the feasibility of one transformation takes $O(k\log n)$ time,
since we can tell whether any given point is inside $Q$ via binary search \cite{PreparataS}).

\begin{figure}
    \centering    \includegraphics[width=.85\textwidth]{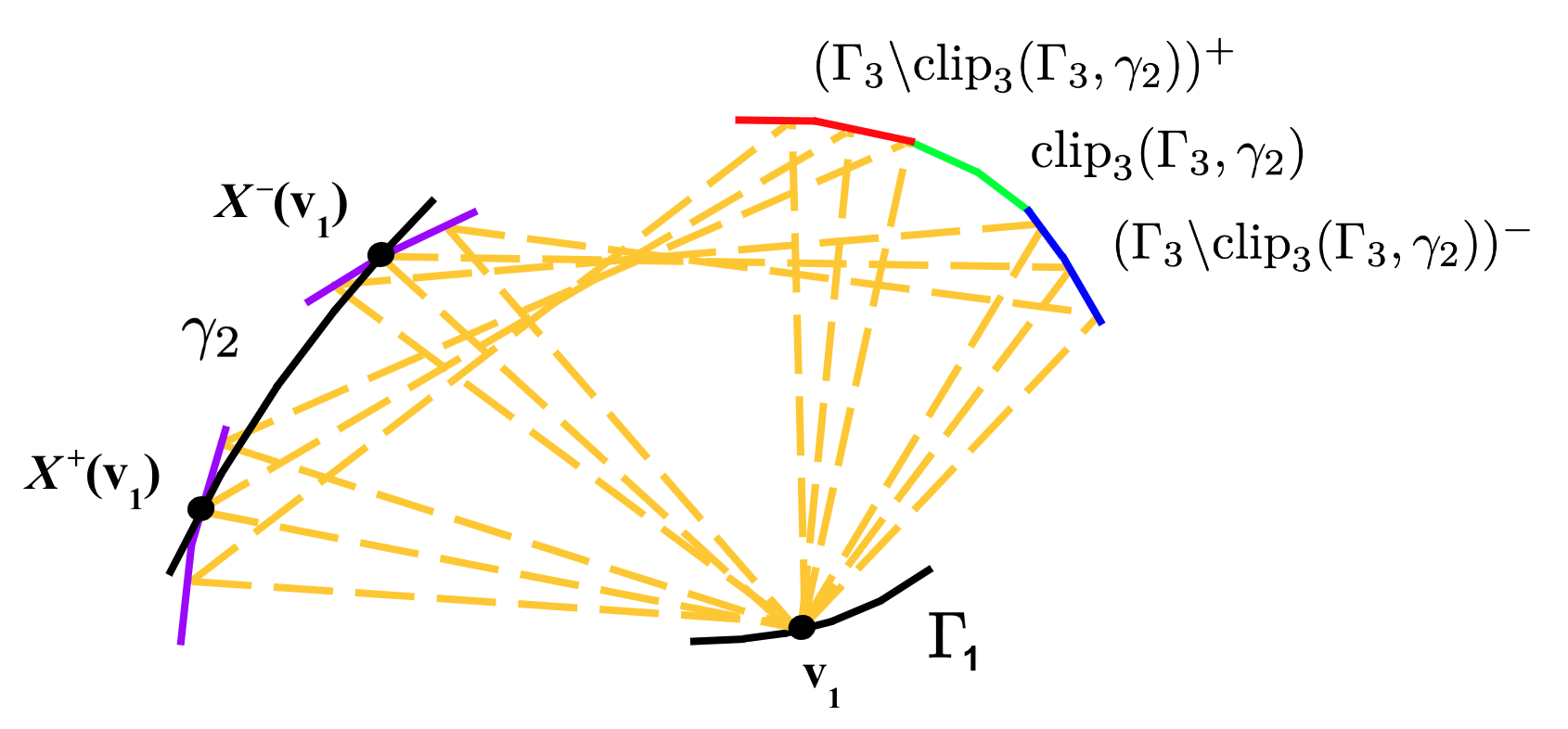}
    \caption{Example where both sub-arcs constituting $\Gamma_3 \setminus \CLIP_3(\Gamma_3, \gamma_2)$ (denoted $(\Gamma_3 \setminus \CLIP_3(\Gamma_3, \gamma_2))^+$ and $(\Gamma_3 \setminus \CLIP_3(\Gamma_3, \gamma_2))^-$) have a similarity transformation that places $p_1$ at a vertex $v_1$ of $\Gamma_1$, $p_2$ on $\gamma_2$, and $p_3$ on $\Gamma_3 \setminus \CLIP_3(\Gamma_3, \gamma_2)$. $X^+(v_1)$ and $X^-(v_2)$ can be found rapidly via Lemma~\ref{lem:int} since $(\Lambda(\Gamma_3 \setminus \CLIP_3(\Gamma_3, \gamma_2)) + \theta_{p_1p_2}) \cap (\Lambda(\gamma_2)+\theta_{p_1p_3}) = \emptyset$ (mod $\pi)$. Dashed triangles are similar to $\triangle p_1p_2p_3$.}
    \label{fig:kgon:3contact}
\end{figure}

It remains to search for transformations $\varphi$ such that $\varphi(p_1)$ is on a vertex of some sub-arc $\gamma_1$, $\varphi(p_2)$ is on some sub-arc $\gamma_2$, and $\varphi(p_3)$ is on $\CLIP_3(\Gamma_3, \gamma_2)$. There are two possibilities:

\begin{itemize}
\item Case 1: For each $i\in\{3,\ldots,k\}$, $\Lambda(\gamma_1)+\theta_{p_2p_i}$ and
$\Lambda(\gamma_2)+\theta_{p_1p_i}$ are disjoint (mod $\pi$). For each $i \in \{3,\ldots,k\}$, we apply Lemma \ref{lem:int} to $(\gamma_1,\gamma_2,\gamma')$ for each of the $O(1)$ contiguous pieces $\gamma'$ of $\Gamma_i\setminus \CLIP_i(\Gamma_i, \gamma_2)$. For each vertex $v_1$ of $\gamma_1$, let $\INT'(v_1)$ be the intersection of all sub-arcs $I(v_1)$ produced during these applications of Lemma~\ref{lem:int}. We now use Lemma \ref{lem:kgon:rs} on $\langle \gamma_1,\gamma_2, \CLIP_3(\Gamma_3, \gamma_2), \ldots, \CLIP_k(\Gamma_k, \gamma_2)\rangle$. The sub-arc we pass to Lemma \ref{lem:kgon:rs} for each vertex $v_1$ of $\gamma_1$ is $\INT(v_1)\cap\INT'(v_1)$.
The total time for all instances of this case is $\OO(r^2\cdot k^2 (|\Gamma_1|/r+m(\Gamma_2)/r+k))$ (since we can operate on a truncated version of $Q$ consisting of just the arcs/sub-arcs specified).

\item Case 2: For some $i\in\{3,\ldots,k\}$, $\Lambda(\gamma_1)+\theta_{p_2p_i}$ and $\Lambda(\gamma_2)+\theta_{p_1p_i}$
intersect (mod $\pi$). Here, we recursively solve the problem for $\langle\gamma_1,\gamma_2,\CLIP_3(\Gamma_3, \gamma_2),\ldots,\CLIP_k(\Gamma_k, \gamma_2)\rangle$.
The sub-arc we pass to the recursive call for each vertex $v_1$ of $\gamma_1$ is $\INT(v_1)\cap\INT'(v_1)$, where $\INT'(v_1)$ is defined as in Case 1.
There are $O(r)$ pairs $(\gamma_1,\gamma_2)$ satisfying this condition per $i$, since when we overlay two subdivisions of $\R$ into $O(r)$ intervals, the number of intersecting pairs of intervals is $O(r)$. 
Thus, the total number of recursive calls for this case is $O(kr)$. The time to produce the sub-problems is subsumed by the time bound in Case 1.
\end{itemize}

\newcommand{\mmm}{\widehat{m}}

\noindent
We take the best solution found in all cases.
Letting $\mmm=|\Gamma_1|+m(\Gamma_2)$, we obtain the following recurrence for the running time:
\[ T(\mmm)\: \le\: O(kr)\, T(\mmm/r + O(k)) + \OO(k^2r^2 (\mmm/r+k)).
\]
As base case, if $\mmm\le kr$, we use the naive bound $T(\mmm)=O(k^2\mmm^2)$ by constructing the space of all feasible placements \cite{AgarwalAS}.
The recurrence solves to $T(\mmm)\le O(k)^{\log_r \mmm}\cdot (kr)^{O(1)} \mmm$.
Choosing $r=\mmm^\eps$ yields $T(\mmm)\le k^{O(1/\eps)}\mmm^{1+O(\eps)}$. (We can adjust $\eps$ by a constant factor.)
\end{proof}

Note that our earlier divide-and-conquer approach in Theorem~\ref{thm:tri:3contact} (which yielded a slightly better \\ $O(n\polylog n)$ running time) does not work here. This is because we need to ensure disjointness conditions for multiple triangles $\triangle p_1p_2p_i$ with different rotational shifts, meaning that we cannot use one common interval $S$ to represent the $k$ arcs in a sub-problem.

It is not difficult to modify the algorithm to solve the 2-contact (2-anchor) case, as shown in
\LIPICS{the full paper.}%
\NORMAL{Appendix~\ref{app:2contact}.

}
Extension to the 4-contact case is more challenging, however.
Without an anchor vertex, Lemma~\ref{lem:int} is no longer applicable, so we cannot clip arcs as in Theorem~\ref{thm:kgon:3contact}'s proof. With some care, we could still use the Pairing Lemma to solve the problem, but we would need to pair
some arcs $\Gamma_i$ with $\Gamma_2$ and some arcs $\Gamma_i$ with another arc such as $\Gamma_1$. As a result, the range searching
sub-problems become more complex, and the running time would be much larger (though subquadratic).
In the next section, we suggest a better, more elegant way to solve the 4-contact case in near-linear time,
without needing range searching at all!

\section{4-Contact (No-Anchor) Case}\label{sec:4contact}

To solve the 4-contact case, we will actually show that the number of 
solutions (not necessarily optimal nor locally optimal) is actually near-linear in $n$, assuming general position input.\footnote{
In degenerate scenarios, e.g., when $P$ and $Q$ are squares, there could technically be an infinite number of 4-contact placements; in such cases, we may apply small perturbations, or instead count the number of distinct quadruples of edges of $Q$ corresponding to such placements.
}
This allows us to focus on the problem of enumerating all 4-contact placements (as we can check their feasibility rapidly). For the enumeration problem, we can immediately reduce the general $k$ case to the $k=4$ case.

\subsection{Covering All 3-Contact Solutions by Pairs and Triples for $k=3$}

To solve the enumeration problem for $k=4$, we will actually revisit the $k=3$ case.
Although the number of 3-contact solutions may be quadratic in the worst case, we observe that
our divide-and-conquer algorithm from Theorem~\ref{thm:tri:3contact} can generate a near-linear number of pairs that ``cover'' all 3-contact solutions.
To be precise, we make the following definitions:
We say that $(q_1,q_2,q_3)$ is \emph{covered} by a list $L$ of triples of edges if 
$q_1$  is on $e_1$, $q_2$ is on $e_2$, and $q_3$ is on $e_3$ for some $(e_1,e_2,e_3)\in L$.
We say that $(q_1,q_2)$ is \emph{covered} by a pairing $M$ of sub-edges if
$q_1$ is on $e_1$ and $q_2$ is on $M(e_1)$ for some sub-edge~$e_1$.

We begin with a variant of the Pairing Lemma that guarantees monotonically \emph{increasing} pairings, which will be
crucial later%
\LIPICS{ (see the full paper for the proof}:

\begin{lemma}\label{lem:tri:match:modif} \emph{(Modified Pairing Lemma)}\ \
Let $\triangle p_1p_2p_3$ be a triangle.
Let $\Gamma_1,\Gamma_2,\Gamma_3$ be arcs of a convex $n$-gon $Q$, such that

\begin{enumerate}
\item $\Lambda(\Gamma_1)+\theta_{p_2p_3}$ and $\Lambda(\Gamma_2)+\theta_{p_1p_3}$ are disjoint (mod $\pi$), and
\item $\Lambda(\Gamma_1)+\theta_{p_2p_3}$ and $\Lambda(\Gamma_3)+\theta_{p_1p_2}$ are disjoint (mod $\pi$).
\end{enumerate}

\noindent
In $\OO(n)$ time, we can compute a monotonically \emph{increasing} pairing $\MATCH$ between $\Gamma_2$ and $\Gamma_3$ with $O(n)$ sub-edges, or a list $L$ of $O(n)$ triples of edges, satisfying the following property:
\begin{quote}
For every similarity transformation $\varphi$ that has $\varphi(p_1)$ on $\Gamma_1$ and $\varphi(p_2)$ on 
$\Gamma_2$ and $\varphi(p_3)$ on $\Gamma_3$, we have $(\varphi(p_2),\varphi(p_3))$ covered by $\MATCH$ or
$(\varphi(p_1),\varphi(p_2),\varphi(p_3))$ covered by $L$.
\end{quote}
\end{lemma}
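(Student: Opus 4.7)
The plan is to apply Lemma~\ref{lem:tri:match} as a subroutine and then case-split on the direction of the pairing it produces. First I would run that lemma in $\OO(n)$ time to obtain a monotone pairing $M'$ between $\Gamma_2$ and $\Gamma_3$; its direction (increasing or decreasing) can be determined in $O(1)$ time by inspecting the rotation angle $\theta_{p_2p_3} - \theta_{p_2p_1}$ that appears in the proof of Lemma~\ref{lem:tri:match}, together with the ccw orientations of the arcs. If $M'$ is monotonically increasing, I output $M := M'$ and leave $L$ empty; the covering property is inherited directly from Lemma~\ref{lem:tri:match}.

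If $M'$ is monotonically decreasing, I instead build $L$ (leaving $M$ empty) by walking along the 1-parameter family of placements encoded by $M'$, emitting a triple $(e_1, e_2, e_3)$ each time the edge triple of $(\zeta, \mu, \nu) = (\varphi(p_1), \varphi(p_2), \varphi(p_3))$ changes. The crossings of $\mu$ and $\nu$ at vertices of $\Gamma_2$ and $\Gamma_3$ are already the sub-edge boundaries of $M'$; the crossings of $\zeta$ at vertices of $\Gamma_1$ are the new ingredient. For each vertex $v_1$ of $\Gamma_1$, I locate the $\mu$-values at which $\zeta(\mu) = v_1$ by a binary search on $\ext{\Gamma_2}$, exploiting the fact that under assumption~2 the curve $f_\mu(\ext{\Gamma_3})$ intersects $\ext{\Gamma_1}$ in a unique point, so $\zeta(\mu)$ is a well-defined continuous function of $\mu$. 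Merging these crossings with the sub-edges of $M'$ gives a single sorted event list; walking it produces $L$ in $O(n)$ time.

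The critical claim --- and the main obstacle --- is that $\zeta$ moves monotonically along $\ext{\Gamma_1}$ as $\mu$ traverses $\ext{\Gamma_2}$, so that each vertex of $\Gamma_1$ is crossed at most once and $|L| = O(|\Gamma_1| + |\Gamma_2| + |\Gamma_3|) = O(n)$. The intuition is that $f_\mu$ is a similarity whose rotation angle and scale factor are fixed by $p_1,p_2,p_3$ alone (independent of $\mu$), differing from $f_0$ only by a translation depending linearly on $\mu$; consequently $f_\mu(\ext{\Gamma_3})$ is a rigidly translating copy of a fixed polygonal curve. As $\mu$ moves along $\ext{\Gamma_2}$, this translating curve sweeps across $\ext{\Gamma_1}$ in a coherent direction, and under the disjoint-slope assumption its unique intersection with $\ext{\Gamma_1}$ slides monotonically. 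Making this fully rigorous --- in particular, ruling out velocity reversals of $\zeta$ at the piece boundaries where $\mu$ crosses a vertex of $\Gamma_2$ --- is the delicate step, but I expect it to follow from a careful continuity-and-bijection analysis along the lines of the proof of Lemma~\ref{lem:tri:match}.
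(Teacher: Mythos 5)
There is a genuine gap, and it is exactly at the step you flag as critical. The claim that $\zeta(\mu)=\varphi(p_1)$ slides monotonically along $\ext{\Gamma_1}$ as $\mu$ traverses $\ext{\Gamma_2}$ does not follow from assumptions 1 and 2. The continuity-and-bijection argument of Lemma~\ref{lem:tri:match} needs injectivity in \emph{both} directions: the forward map $\mu\mapsto\zeta$ is well defined by assumption 2, but injectivity of $\zeta\mapsto\mu$ requires (via the argument of Lemma~\ref{lem:int}) that $\Lambda(\Gamma_2)+\theta_{p_1p_3}$ and $\Lambda(\Gamma_3)+\theta_{p_1p_2}$ be disjoint (mod $\pi$) --- a \emph{third} disjointness condition that the lemma does not assume. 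Without it, for a fixed $\zeta=v_1$ the curve traced by $\varphi(p_3)$ as $\varphi(p_2)$ runs over $\ext{\Gamma_2}$ can cross $\ext{\Gamma_3}$ more than once, so $\zeta(\mu)$ is not injective, hence not monotone. The paper warns about precisely this in the remark following the Pairing Lemma: a monotone pairing between $\Gamma_1$ and $\Gamma_2$ cannot be claimed, since in the worst case a single matched sub-edge pair of $\Gamma_2,\Gamma_3$ admits placements with \emph{every} vertex of $\Gamma_1$ (otherwise the divide-and-conquer would give an $O(n\log n)$ bound on 3-contact solutions, contradicting the known $\Omega(n^2)$ lower bound). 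In such instances your event walk records $\Omega(n)$ crossings of $\Gamma_1$-vertices per sub-edge of $\mu$, i.e.\ $|L|=\Omega(n^2)$. The translating-curve intuition also fails on its own terms: the unique intersection of a rigidly translating convex chain with a fixed convex chain moves monotonically only under an additional slope condition on the translation direction, and that direction changes every time $\mu$ passes a vertex of $\Gamma_2$.

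What saves the decreasing case --- and what your proposal is missing --- is that the third disjointness condition can be \emph{restored by splitting}. The paper defines, for a threshold angle $t$, the halves $\Gamma_2^-(t),\Gamma_2^+(t)$ and $\Gamma_3^-(t),\Gamma_3^+(t)$ by cutting each arc where its supporting lines reach angle $t-\theta_{p_1p_3}$ and $t-\theta_{p_1p_2}$ respectively, and uses the fact that the pairing is monotonically \emph{decreasing} to binary-search for a $t$ at which the two cut points are matched with each other. Then every matched pair lies in $\Gamma_2^-\times\Gamma_3^+$ or $\Gamma_2^+\times\Gamma_3^-$, and on each of these cross-products all three pairwise disjointness conditions hold, so Lemma~\ref{lem:triple} applies: two applications of Lemma~\ref{lem:tri:match} yield monotone pairings of $\Gamma_1$ with $\Gamma_2$ and of $\Gamma_2$ with $\Gamma_3$, whose overlay along $\Gamma_2$ gives the $O(n)$ triples. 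Your $\zeta(\mu)$ is indeed monotone on each of the two pieces (so your walk would then work with one reversal), but establishing that requires identifying this split and verifying the third condition on the cross-pieces, which is the real content of the proof; it is not obtainable by a direct bijection analysis on the unsplit arcs.
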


\NORMAL{

Before proving Lemma~\ref{lem:tri:match:modif}, we first analyze an easy special case:

\begin{lemma}\label{lem:triple} 
Let $\triangle p_1p_2p_3$ be a triangle.
Let $\Gamma_1,\Gamma_2,\Gamma_3$ be arcs of a convex $n$-gon $Q$, such that

\begin{enumerate}
\item $\Lambda(\Gamma_1)+\theta_{p_2p_3}$ and $\Lambda(\Gamma_2)+\theta_{p_1p_3}$ are disjoint (mod $\pi$), 
\item $\Lambda(\Gamma_1)+\theta_{p_2p_3}$ and $\Lambda(\Gamma_3)+\theta_{p_1p_2}$ are disjoint (mod $\pi$), 
\item $\Lambda(\Gamma_2)+\theta_{p_1p_3}$ and $\Lambda(\Gamma_3)+\theta_{p_1p_2}$ are disjoint (mod $\pi$).
\end{enumerate}

\noindent
In $\OO(n)$ time, we can compute a list $L$ of $O(n)$ triples of edges, satisfying the following property:
\begin{quote}
For every similarity transformation $\varphi$ that has $\varphi(p_1)$ on $\Gamma_1$, $\varphi(p_2)$ on 
$\Gamma_2$, and $\varphi(p_3)$ on $\Gamma_3$, we have $(\varphi(p_1),\varphi(p_2),\varphi(p_3))$
covered by $L$.
\end{quote}
\end{lemma}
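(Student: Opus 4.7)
The plan is to apply the Pairing Lemma (Lemma~\ref{lem:tri:match}) twice, using two different arcs in turn as the ``anchor.'' Observe that by the symmetry $\theta_{p_ip_j}=\theta_{p_jp_i}$, one can relabel the triangle so that the two disjointness hypotheses of the Pairing Lemma in each application match a pair of the three hypotheses of our lemma. Concretely, with the relabeling $(p_1',p_2',p_3')=(p_3,p_1,p_2)$ and $(\Gamma_1',\Gamma_2',\Gamma_3')=(\Gamma_3,\Gamma_1,\Gamma_2)$, conditions (2) and (3) become the Pairing Lemma's two preconditions, yielding a pairing $\MATCH_{12}$ between $\Gamma_1$ and $\Gamma_2$ (with $\Gamma_3$ as anchor). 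Symmetrically, the relabeling $(p_1',p_2',p_3')=(p_2,p_1,p_3)$ and $(\Gamma_1',\Gamma_2',\Gamma_3')=(\Gamma_2,\Gamma_1,\Gamma_3)$ makes conditions (1) and (3) into the preconditions, yielding a pairing $\MATCH_{13}$ between $\Gamma_1$ and $\Gamma_3$ (with $\Gamma_2$ as anchor). Each pairing takes $\OO(n)$ time.

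Next, I would overlay the two subdivisions of $\Gamma_1$ induced by $\MATCH_{12}$ and $\MATCH_{13}$, so that in the common refinement each sub-edge $e_1$ has well-defined images $\MATCH_{12}(e_1)\subseteq\Gamma_2$ and $\MATCH_{13}(e_1)\subseteq\Gamma_3$ (obtained from its containing sub-edge in each original pairing). The refinement still has $O(n)$ sub-edges, and the output list
\[
L \;=\; \bigl\{\,(e_1,\MATCH_{12}(e_1),\MATCH_{13}(e_1)) : e_1 \text{ an overlaid sub-edge of } \Gamma_1\,\bigr\}
\]
has $O(n)$ triples and is produced in $\OO(n)$ time.

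To verify covering, consider any similarity transformation $\varphi$ with $\varphi(p_i)\in\Gamma_i$ for $i=1,2,3$, and let $e_1$ be the overlaid sub-edge of $\Gamma_1$ containing $\varphi(p_1)$. The covering property of $\MATCH_{12}$ (whose premise requires $\varphi(p_3)\in\Gamma_3$, which holds) forces $\varphi(p_2)\in\MATCH_{12}(e_1)$, and the covering property of $\MATCH_{13}$ (whose premise requires $\varphi(p_2)\in\Gamma_2$, which holds) forces $\varphi(p_3)\in\MATCH_{13}(e_1)$. Hence $(\varphi(p_1),\varphi(p_2),\varphi(p_3))$ is covered by the triple indexed by $e_1$.

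The only delicate point I anticipate is the bookkeeping for the two relabelings of the Pairing Lemma---verifying that each ``cyclic shift'' of its hypotheses really reduces to two of the three stated pairwise-disjointness conditions of Lemma~\ref{lem:triple}. This is routine given the $i\leftrightarrow j$ symmetry of $\theta_{p_ip_j}$, but it is the only place where a subscript slip could sneak in; otherwise the construction is essentially immediate from the Pairing Lemma.
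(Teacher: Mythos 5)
Your proposal is correct and follows essentially the same route as the paper: two applications of the Pairing Lemma under suitable relabelings, followed by overlaying the two induced subdivisions on the common arc and emitting one triple per refined sub-edge. The only (immaterial) difference is that the paper chooses the pairings $\MATCH_{12}$ and $\MATCH_{23}$ and overlays along $\Gamma_2$, whereas you choose $\MATCH_{12}$ and $\MATCH_{13}$ and overlay along $\Gamma_1$; by the symmetry of the three hypotheses both work, and your subscript bookkeeping checks out.
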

\begin{proof}
We apply Lemma~\ref{lem:tri:match} twice, to obtain a pairing $\MATCH_{12}$ between $\Gamma_1$ and $\Gamma_2$
and a pairing $\MATCH_{23}$ between $\Gamma_2$ and $\Gamma_3$.
We overlay the 2 subdivisions along $\Gamma_2$.  We can then append the triples $(\MATCH_{12}(e_2),e_2,\MATCH_{23}(e_2))$
over all sub-edges $e_2$ of $\Gamma_2$ to $L$.
\end{proof}

To prove Lemma~\ref{lem:tri:match:modif}, we show that any monotonically decreasing pairing actually implies a division of the arcs into sub-problems that can be processed using Lemma~\ref{lem:triple}:

\begin{proof}
We apply Lemma~\ref{lem:tri:match} to obtain a pairing $\MATCH$, which already satisfies the lemma if $\MATCH$ is monotonically increasing.
We now describe how to convert $\MATCH$ to a list $L$ of triples if it is monotonically decreasing.

Let $\Gamma_2^-(t)$ (resp.\ $\Gamma_2^+(t)$) be the sub-arc of $\Gamma_2$ consisting of all edges whose supporting lines' angles are in the interval $(t-\theta_{p_1p_3} - \pi/2$, $t-\theta_{p_1p_3}]$ (mod $\pi$) (resp. $[t-\theta_{p_1p_3}, t-\theta_{p_1p_3} + \pi/2]$ (mod $\pi$)). $\Gamma_2^-(t)$ and $\Gamma_2^+(t)$ may share at most one edge. The intervals allow wrap-around, so $(a,b]$ indicates $(a,\pi) \cup [0,b]$ if $a > b$.
Let $\Gamma_3^-(t)$ and $\Gamma_3^+(t)$ be defined similarly.
Let $q_2(t)$ denote the dividing point (or shared edge) between $\Gamma_2^-(t)$ and $\Gamma_2^+(t)$, and let $q_3(t)$ denote the dividing point (or shared edge) between $\Gamma_3^-(t)$ and $\Gamma_3^+(t)$.
As $t$ increases, both $q_2(t)$ and $q_3(t)$ are monotonically increasing, i.e., move in ccw order.
Since $\MATCH$ is monotonically decreasing, we can find (via binary search) a value of $t$ such that a sub-edge incident to (or contained within) $q_2(t)$ is paired with a sub-edge incident to (or contained within) $q_3(t)$. Let us fix this value of $t$.

As long as at least one of $q_2(t)$ or $q_3(t)$ is not an edge, we have that $\Lambda(\Gamma_2^-(t))+\theta_{p_1p_3}$ and $\Lambda(\Gamma_3^+(t))+\theta_{p_1p_2}$ are disjoint (mod $\pi$). In this case,
we can apply Lemma~\ref{lem:triple} to $(\Gamma_1,\Gamma_2^-(t),\Gamma_3^+(t))$ to get $O(n)$ triples.
Similarly, we can apply Lemma~\ref{lem:triple} to $(\Gamma_1,\Gamma_2^+(t),\Gamma_3^-(t))$ to get $O(n)$ triples.
On the other hand, in $\MATCH$, there are no pairs of sub-edges in $\Gamma_2^-(t)$  with sub-edges in $\Gamma_3^-(t)$,
nor pairs of sub-edge in $\Gamma_2^+(t)$ with sub-edges in $\Gamma_3^+(t)$.

If $q_2(t)$ and $q_3(t)$ are both edges, then there exists exactly one sub-edge $e_2$ of $q_2(t)$ paired with a sub-edge $e_3$ of $q_3(t)$. We enumerate all $O(n)$ triples of the form $(e_1, q_2(t), q_3(t))$ and append these to $L$. Then, we apply Lemma~\ref{lem:triple} to $(\Gamma_1,\Gamma_2^-(t)\setminus  e_2,\Gamma_3^+(t)\setminus e_3)$ and $(\Gamma_1,\Gamma_2^+(t)\setminus e_2,\Gamma_3^-(t)\setminus e_3)$ as before. Note that one or more of $\Gamma_2^-(t)\setminus e_2$, $\Gamma_2^+(t)\setminus e_2$, $\Gamma_3^-(t)\setminus e_3$, or $\Gamma_3^+(t)\setminus e_3$ might consist of two contiguous pieces instead of one; this can be handled by applying Lemma~\ref{lem:triple} to all $O(1)$ combinations of contiguous pieces.
\end{proof}

}

\newcommand{\MMM}{{\cal M}}

By adapting our $k=3$ algorithm, we can cover all 3-contact solutions
by $O(\log n)$ monotonically increasing pairings, together with an extra set of $O(n\log n)$ triples:

\begin{theorem}\label{thm:tri:cover} 
Let $\triangle p_1p_2p_3$ be a triangle.
Let $\Gamma_1,\Gamma_2,\Gamma_3$ be arcs of a convex $n$-gon $Q$.
In $\OO(n)$ time, we can compute a collection $\MMM$ of
$O(\log n)$ monotonically \emph{increasing} pairings between $\Gamma_1$ and $\Gamma_2$, between $\Gamma_2$ and $\Gamma_3$,
and between $\Gamma_1$ and $\Gamma_3$, each with $O(n)$ sub-edges, and a list $L$ of $O(n\log n)$ triples of edges, satisfying
the following property:
\begin{quote}
For every similarity transformation $\varphi$ that has $\varphi(p_1)$ on $\Gamma_1$ and $\varphi(p_2)$ on 
$\Gamma_2$ and $\varphi(p_3)$ on $\Gamma_3$, we have $(\varphi(p_1),\varphi(p_2))$ or $(\varphi(p_2),\varphi(p_3))$ or
$(\varphi(p_1),\varphi(p_3))$ covered by some pairing in $\MMM$, or $(\varphi(p_1),\varphi(p_2),\varphi(p_3))$ covered by $L$.
\end{quote}
\end{theorem}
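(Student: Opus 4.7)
The plan is to adapt the divide-and-conquer algorithm from Theorem~\ref{thm:tri:3contact} to produce a cover of the 3-contact solution space rather than search for a single optimum. Recall that that recursion partitions a current angular interval $S$ into $S^-$ and $S^+$ and considers Cases 1--4 (plus symmetric variants). Cases 1 and 2 are recursive; Case 3 and its symmetric variants call Lemma~\ref{lem:tri:match} on a triple of arcs where two disjointness conditions hold; Case 4 and its variants call Lemma~\ref{lem:int} where only one disjointness condition holds.

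In each Case 3 invocation, I replace the call to Lemma~\ref{lem:tri:match} with a call to Lemma~\ref{lem:tri:match:modif} (the Modified Pairing Lemma), which returns either a monotonically increasing pairing (added to $\MMM$) or a list of $O(m)$ triples (appended to $L$). In each Case 4 invocation, for every vertex $v_1$ of the ``anchor'' arc I record the single triple of edges $(e_1,e_2,e_3)$ containing the placements of $p_1,p_2,p_3$ under the unique similarity transformation prescribed by Lemma~\ref{lem:int}, and append it to $L$; this contributes $O(|\Gamma_1(S^-)|)$ triples. By covering all the cases (including the symmetric ones swapping the roles of $p_2$ and $p_3$ and/or $S^-$ and $S^+$), every 3-contact similarity transformation is either covered by some triple in $L$ or by some pairing in $\MMM$.

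For the size bounds: the recursion tree has depth $O(\log n)$, and at each level the sum of sub-problem input sizes is $O(n)$. Hence the total number of triples appended to $L$ over the entire recursion is $O(n\log n)$, and the total construction time is $\OO(n)$. For the pairings, the crucial observation is that the monotonically increasing pairings returned by separate Case 3 sub-problems at the same recursion level can be concatenated into a single monotonically increasing pairing: different sub-problems at a level process disjoint angular intervals, and on each of $\Gamma_2$ and $\Gamma_3$ the ccw order of edges agrees (up to a fixed rotational shift by $\theta_{p_1p_3}$ or $\theta_{p_1p_2}$) with the angular order. With the three symmetric ``types'' of Case 3 --- pairings between $(\Gamma_1,\Gamma_2)$, $(\Gamma_2,\Gamma_3)$, and $(\Gamma_1,\Gamma_3)$ --- each level contributes $O(1)$ combined pairings, for a total of $O(\log n)$ in $\MMM$.

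The main obstacle will be verifying the concatenation step rigorously: one must ensure that the monotone-increase property is preserved when assembling pairings produced by separate sub-problems. This hinges on the fact that the angular-interval partitioning used in Theorem~\ref{thm:tri:3contact} is consistent with the cyclic order along each paired arc, which follows directly from the definitions of $\Gamma_i(S)$ in terms of supporting-line angles. Once this consistency is checked, the rest reduces to routine bookkeeping of the divide-and-conquer recursion.
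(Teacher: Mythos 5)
Your overall plan---rerun the Theorem~\ref{thm:tri:3contact} recursion, substitute the Modified Pairing Lemma in the ``double-disjoint'' case, and join the pairings produced at corresponding recursion levels---matches the paper's proof, and your accounting of the sizes ($O(\log n)$ joined pairings of $O(n)$ sub-edges each, $O(n\log n)$ triples, $\OO(n)$ time) and of the concatenation step is correct. The genuine gap is in your treatment of Case~4. In Theorem~\ref{thm:tri:3contact}, Case~4 could be dispatched by Lemma~\ref{lem:int} because there $\varphi(p_1)$ was pinned to a \emph{vertex} $v_1$ of $\Gamma_1$, so each $v_1$ admitted a unique candidate transformation. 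In Theorem~\ref{thm:tri:cover} the covering property must hold for every $\varphi$ with $\varphi(p_1)$ \emph{anywhere} on $\Gamma_1$ (this is the whole point, since the theorem feeds into the 4-contact case where no vertex of $P$ sits at a vertex of $Q$). For each fixed interior point of an edge $e_1$ of $\Gamma_1(S^-)$ there is still a unique transformation with $\varphi(p_2)$ on $\Gamma_2(S^-)$ and $\varphi(p_3)$ on $\Gamma_3(S^+)$, but as that point sweeps along $e_1$ the images $\varphi(p_2)$ and $\varphi(p_3)$ sweep across potentially many edges of $\Gamma_2$ and $\Gamma_3$. Recording one triple per vertex of $\Gamma_1(S^-)$ therefore misses these one-parameter families, and no list of $O(|\Gamma_1(S^-)|)$ edge-triples can cover them.

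The fix---which is what the paper does---is to note that once $p_1$ is no longer a distinguished anchor, Case~4 is simply Case~3 with the roles of the vertices permuted: for the configuration $(\Gamma_1(S^-),\Gamma_2(S^-),\Gamma_3(S^+))$, the interval $\Lambda(\Gamma_3(S^+))+\theta_{p_1p_2}\subseteq S^+$ is disjoint from both $\Lambda(\Gamma_1(S^-))+\theta_{p_2p_3}\subseteq S^-$ and $\Lambda(\Gamma_2(S^-))+\theta_{p_1p_3}\subseteq S^-$, which are exactly the two hypotheses of Lemma~\ref{lem:tri:match:modif} with $p_3$ in the distinguished role. This yields a monotonically increasing pairing between $\Gamma_1$ and $\Gamma_2$ (or a list of triples). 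Indeed, the theorem statement promises pairings between all three pairs of arcs, including between $\Gamma_1$ and $\Gamma_2$; your construction, which handles the ``$p_3$ alone'' and ``$p_2$ alone'' splits by triples only, would never produce those, which is a further sign that this case must go through the Pairing Lemma rather than Lemma~\ref{lem:int}.
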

\begin{proof}
We modify the divide-and-conquer algorithm in the proof of Theorem~\ref{thm:tri:3contact}.
Let $S$ be an interval.
Define $\Gamma_1(S),\Gamma_2(S),\Gamma_3(S)$ as before.
We will recursively solve the problem for $\Gamma_1(S),\Gamma_2(S),\Gamma_3(S)$.

As a first step, we remove edges not participating in $\Gamma_1(S),\Gamma_2(S),\Gamma_3(S)$, so that  the number of edges in $Q$ 
reduced to $m(S):=|\Gamma_1(S)|+|\Gamma_2(S)|+|\Gamma_3(S)|$.
Divide $S$ into two disjoint sub-intervals $S^-$ and $S^+$ so that $m(S^-),m(S^+) = m(S)/2 \pm O(1)$.
We consider various possibilities:

\begin{itemize}
\item Case 1: $\varphi(p_1)$ is on $\Gamma_1(S^-)$, $\varphi(p_2)$ is on $\Gamma_2(S^-)$, and $\varphi(p_3)$ is on $\Gamma_3(S^-)$.
We can recursively solve the problem for $S^-$.
\item Case 2: $\varphi(p_1)$ is on $\Gamma_1(S^+)$, $\varphi(p_2)$ is on $\Gamma_2(S^+)$, and $\varphi(p_3)$ is on $\Gamma_3(S^+)$.
We can recursively solve the problem for $S^+$.
\item Case 3: $\varphi(p_1)$ is on $\Gamma_1(S^-)$, $\varphi(p_2)$ is on $\Gamma_2(S^+)$, and $\varphi(p_3)$ is on $\Gamma_3(S^+)$.
Since $\Lambda(\Gamma_1(S^-))+\theta_{p_2p_3}\subseteq S^-$ and $\Lambda(\Gamma_2(S^+))+\theta_{p_1p_3}\subseteq S^+$ are disjoint,
and $\Lambda(\Gamma_1(S^-))+\theta_{p_2p_3}\subseteq S^-$ and $\Lambda(\Gamma_3(S^+))+\theta_{p_1p_2}\subseteq S^+$ are disjoint (mod $\pi$),
we can solve the problem by Lemma \ref{lem:tri:match:modif} in $\OO(m(S))$ time.
\end{itemize}

\noindent
All remaining cases are symmetric to Case 3.  (The previous Case 4 is now symmetric to Case 3, since $p_1$ is no longer treated as a special
anchor vertex.)

A pairing between $\Gamma_1(S^-)$ and $\Gamma_2(S^-)$ and a pairing between $\Gamma_1(S^+)$ and $\Gamma_2(S^+)$ produced by the recursive calls in Cases 1 and 2 can be joined into one pairing while
remaining monotonically increasing, since $\Gamma_1(S^-)$ precedes $\Gamma_1(S^+)$ and $\Gamma_2(S^-)$ precedes $\Gamma_2(S^+)$ in
ccw order. We can join the pairings for $\Gamma_2,\Gamma_3$ and $\Gamma_1,\Gamma_3$ similarly.
(And we can trivially union the lists of triples together.)

This yields a total of $O(\log n)$ pairings each with $O(n)$ sub-edges, plus an extra list of $O(n\log n)$ triples.
\end{proof}

\subsection{Enumerating All 4-Contact Solutions for $k=4$}

To solve the enumeration problem for $k=4$, we claim that we do not need any further ingredients!  We can just
run our $k=3$ algorithm for each of the 4 triangles from the input 4-gon and then piece the outputs together in a careful way.

\begin{lemma}\label{lem:3}
Let $p_1p_2p_3p_4$ be a $4$-gon and $Q$ be a convex $n$-gon in general position.
Given 3 edges $e_1,e_2,e_3$ of $Q$, there are only $O(1)$ 
similarity transformations $\varphi$ with $\varphi(p_1)$  on $e_1$,
$\varphi(p_2)$  on $e_2$, $\varphi(p_3)$ on $e_3$, and $\varphi(p_4)$ on $\partial Q$,
and they can be computed in $O(\log n)$ time.
\end{lemma}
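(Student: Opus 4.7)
The plan is to exploit a parameter count: a similarity transformation $\varphi_{s,t,u,v}$ is determined by 4 real parameters, and each of the three conditions $\varphi(p_i)\in\ext{e_i}$ for $i=1,2,3$ imposes a single linear equation on $(s,t,u,v)$, since $\varphi_{s,t,u,v}(p_i)$ is linear in $(s,t,u,v)$ and the condition is that its inner product with the normal to $e_i$ takes a fixed value. Under the general position hypothesis, these three linear equations are independent, so the set of candidate transformations is a $1$-dimensional affine subspace $L\subset\R^4$, which I will compute by solving a $3\times 4$ linear system in $O(1)$ time.

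Next, I will parameterize $L$ by $\lambda\in\R$ and track the image $\varphi_\lambda(p_4)$. Since $(s,t,u,v)\mapsto\varphi_{s,t,u,v}(p_4)$ is linear from $\R^4$ to $\R^2$, its restriction to the $1$-dimensional subspace $L$ traces a straight line $\ell\subset\R^2$ (or degenerates to a single point, but general position rules this out). The crucial observation is then that a straight line meets the boundary of a convex polygon in at most $2$ points, so at most $2$ values of $\lambda$ can give $\varphi_\lambda(p_4)\in\partial Q$. This immediately yields the $O(1)$ bound on the number of admissible $\varphi$.

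For the algorithmic part, after computing $L$ and a parameterization in constant time, I determine $\ell$ by evaluating $\varphi_\lambda(p_4)$ at two values of $\lambda$, and then find the at most two intersections of $\ell$ with $\partial Q$ by binary search on the vertices of $Q$ (exploiting convexity, as in standard line/convex-polygon intersection), which takes $O(\log n)$ time. For each intersection point $q$, I recover the corresponding $\lambda$ from a single coordinate, reconstruct $\varphi_\lambda$, and verify in $O(1)$ time that $\varphi_\lambda(p_i)$ actually lies on the segment $e_i$ (not just on $\ext{e_i}$) for $i=1,2,3$; the condition $\varphi_\lambda(p_4)\in\partial Q$ holds by construction.

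The main subtlety, which I expect to be the only delicate point, is handling degeneracies: $L$ could collapse to dimension $\ge 2$, the image $\varphi_\lambda(p_4)$ could be constant in $\lambda$, or $\ell$ could coincide with the supporting line of some edge of $Q$, yielding infinitely many intersections and spoiling both the $O(1)$ count and the binary search. All three are codimension-$\ge 1$ conditions on the input, and are excluded by the general position assumption already built into the lemma's hypothesis, so I do not need to treat them separately.
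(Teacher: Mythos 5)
Your proposal is correct and matches the paper's proof essentially verbatim: both identify the solution set of the three on-line constraints as a one-dimensional affine set in the $(s,t,u,v)$-parameterization, project it via the linear map $(s,t,u,v)\mapsto\varphi_{s,t,u,v}(p_4)$ to a line (segment) in $\R^2$, and intersect with $\partial Q$ by binary search to get at most $2$ solutions in $O(\log n)$ time. The only cosmetic difference is that the paper folds the six segment-membership inequalities into the set $\mathcal{L}$ up front (making it a segment), while you verify them a posteriori on each candidate; both handle degeneracies by appeal to the general position hypothesis.
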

\LIPICS{
\begin{proof}
See the full paper.
\end{proof}
}
\NORMAL{
\begin{proof}
Let $\varphi_{s,t,u,v}$ be as before.
Define 
\[ \mathcal{L} = \{(s,t,u,v)\in\R^4: \mbox{$\varphi_{s,t,u,v}(p_1)$ is on $e_1$, $\varphi_{s,t,u,v}(p_2)$ is on $e_2$, and
$\varphi_{s,t,u,v}(p_3)$ is on $e_3$}\}.
\]
Then $\mathcal{L}$ is a line segment in $\R^4$  (since there are 3 linear equality constraints and 6 inequality constraints in the variables $s,t,u,v$). Thus, $\mathcal{L}'=\{\varphi_{s,t,u,v}(p_4): (s,t,u,v)\in \mathcal{L}\}$ is a line segment in $\R^2$.  The edges we want correspond to intersections of $\mathcal{L}'$ with $\partial Q$. There are at most 2 such intersections, and they can be found via binary search~\cite{PreparataS}.
\end{proof}
}

\begin{theorem}\label{thm:4gon}
Let $P$ be a $4$-gon with vertices $p_1,p_2,p_3,p_4$ and $Q$ be a convex $n$-gon in general position.
There are at most $O(n\log^2 n)$ 
similarity transformations $\varphi$ such that $\varphi(p_1),\varphi(p_2),\varphi(p_3),\varphi(p_4)$
are on $\partial Q$, and they can be enumerated in $\OO(n)$ time.
\end{theorem}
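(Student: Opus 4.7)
My plan is to leverage the fact that every 4-contact placement of $P$ is in particular a 3-contact placement of the sub-triangle $\triangle p_1 p_2 p_3$, and then use Theorem~\ref{thm:tri:cover} to cover all such 3-contact placements compactly. First, I would subdivide $\partial Q$ into $O(1)$ arcs of angle range $<\pi/3$ and, over each of the $O(1)$ assignments of arcs to the vertices $p_1, p_2, p_3$, apply Theorem~\ref{thm:tri:cover} to obtain a collection $\MMM$ of $O(\log n)$ monotonically increasing pairings and a list $L$ of $O(n \log n)$ triples of edges. For each triple $(e_1, e_2, e_3) \in L$, Lemma~\ref{lem:3} produces the $O(1)$ similarity transformations $\varphi$ with $\varphi(p_i) \in e_i$ for $i \in \{1,2,3\}$ and $\varphi(p_4) \in \partial Q$ in $O(\log n)$ time each, yielding $O(n \log n)$ candidate placements in $O(n \log^2 n)$ time.

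The less obvious part is handling each pairing $M \in \MMM$, say between $\Gamma_i$ and $\Gamma_j$ (with remaining triangle vertex $p_k$, where $\{i,j,k\} = \{1,2,3\}$). Within a sub-edge pair $(e_i, M(e_i))$ of $M$, the matching underlying $M$ (from Lemma~\ref{lem:tri:match}) gives a 1-parameter family of similarity transformations $\varphi$ satisfying $\varphi(p_i) \in e_i$, $\varphi(p_j) \in M(e_i)$, and $\varphi(p_k) \in \ext{\Gamma_k}$. I would further refine the subdivision by breaking the parameter interval at every value where $\varphi(p_k)$ crosses a vertex of $\ext{\Gamma_k}$, producing pieces on each of which $\varphi$ depends linearly on the parameter, so $\varphi(p_4)$ traces a line segment in $\R^2$. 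The intersection of this line segment with the convex polygon $\partial Q$ consists of at most two points, locatable in $O(\log n)$ time by binary search~\cite{PreparataS}; for each such intersection, I would verify that $\varphi(p_i), \varphi(p_j), \varphi(p_k)$ also genuinely lie on $\partial Q$ (not merely on the extensions $\ext{\Gamma_i}, \ext{\Gamma_j}, \ext{\Gamma_k}$), before reporting the placement.

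The main obstacle is establishing the key monotonicity claim: as the matching parameter sweeps across all sub-edge pairs of $M$, the three points $\varphi(p_i), \varphi(p_j), \varphi(p_k)$ each move monotonically along $\ext{\Gamma_i}, \ext{\Gamma_j}, \ext{\Gamma_k}$, so that each edge of $\ext{\Gamma_k}$ is entered at most once. Monotonicity of $\varphi(p_i)$ and $\varphi(p_j)$ already follows from Lemma~\ref{lem:tri:match}; for $\varphi(p_k)$, I would argue that the two disjointness conditions of that lemma also imply bijective (and hence, by continuity, monotone) correspondences between $\varphi(p_k)$ and each of $\varphi(p_i), \varphi(p_j)$, via a symmetric rerun of the bijectivity argument in which the role of $p_1$ is played by $p_k$. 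Granting this, the refined subdivision has $O(n + |M|) = O(n)$ pieces per pairing, so summed over all $O(\log n)$ pairings in $\MMM$ we obtain $O(n \log n)$ further 4-contact placements in $O(n \log^2 n)$ time. Combined with the triples contribution, the total count is $O(n \log n) \le O(n \log^2 n)$ and the total running time is $\OO(n)$, establishing the theorem.
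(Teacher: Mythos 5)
Your first step (handling the triples in $L$ via Lemma~\ref{lem:3}) is fine, but the heart of your argument---the ``key monotonicity claim''---is false, and the paper explicitly warns against it in the discussion immediately following the Pairing Lemma. A pairing $\MATCH\in\MMM$ between $\Gamma_i$ and $\Gamma_j$ is produced by Lemma~\ref{lem:tri:match} under only \emph{two} disjointness conditions, both involving $\Lambda(\Gamma_k)$ (the arc of the ``pivot'' vertex $p_k$) against $\Lambda(\Gamma_i)$ and $\Lambda(\Gamma_j)$; there is no guarantee that $\Lambda(\Gamma_i)$ and $\Lambda(\Gamma_j)$ themselves are disjoint after the relevant shifts. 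Your proposed ``symmetric rerun of the bijectivity argument with $p_k$ in the role of $p_1$'' needs exactly that missing third condition (this is the extra hypothesis of Lemma~\ref{lem:triple}, which is why that lemma can output triples while the general Pairing Lemma cannot). Without it, a single matched sub-edge pair $(e_i,\MATCH(e_i))$ can admit legal placements with $\varphi(p_k)$ visiting \emph{every} vertex of $\Gamma_k$, so your refined subdivision has $\Omega(n)$ pieces per sub-edge pair, i.e., $\Omega(n^2)$ in total. Indeed, if your monotonicity claim were true, then combined with the divide-and-conquer of Theorem~\ref{thm:tri:3contact} it would yield an $O(n\log n)$ bound on the number of 3-contact solutions for triangles, contradicting the known $\Omega(n^2)$ lower bound of Agarwal, Amenta, and Sharir. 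So the approach of covering only the single triangle $\triangle p_1p_2p_3$ and then chasing $p_k$ and $p_4$ along the pairing cannot work.

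The paper's proof circumvents this by applying Theorem~\ref{thm:tri:cover} to \emph{all four} sub-triangles of $P$, so that every 4-contact placement not covered by the combined triple list $L$ yields a covered pair in each of its four vertex triples. This forces the auxiliary graph $G_\varphi$ on $\{1,2,3,4\}$ to contain at least two edges, and the two cases are then handled without ever tracking the uncovered vertices along a pairing: two \emph{adjacent} covered pairs (say $12$ and $23$) are overlaid along the common arc $\Gamma_2$ to produce $O(n)$ triples, which are fed to Lemma~\ref{lem:3}; two \emph{independent} covered pairs (say $12$ and $34$) are handled by a new ``pairing of pairs'' argument, exploiting the fact that the pairings are monotonically \emph{increasing} (this is where Lemma~\ref{lem:tri:match:modif} is essential) so that the angle intervals $\Theta(e_1,\MATCH_{12}(e_1))$ and $\Theta(e_3,\MATCH_{34}(e_3))$ each form a monotone family, and only $O(n)$ pairs of them can intersect. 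Your proposal contains neither the multi-triangle covering idea nor either of these two combination mechanisms, so the gap is not repairable by a local fix.
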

\begin{proof}
Divide $\partial Q$ into $O(1)$ arcs, and 
let $\Gamma_1,\ldots,\Gamma_4$ be 4 such arcs of $Q$.
(We will try all $O(1)$ choices for $\Gamma_1,\ldots,\Gamma_4$.)
We apply Theorem~\ref{thm:tri:cover} to the 4 triangles $\triangle p_1p_2p_3$, $\triangle p_1p_2p_4$, $\triangle p_1p_3p_4$, and
$\triangle p_2p_3p_4$,
to get a combined collection $\MMM$ of $O(\log n)$ monotonically increasing pairings and 
a combined list $L$ of triples.

We consider various possibilities (we will try them all and return the union of the outputs).  If $(\varphi(p_1),\varphi(p_2),\\ \varphi(p_3))$,
$(\varphi(p_1),\varphi(p_2),\varphi(p_4))$, $(\varphi(p_1),\varphi(p_3),\varphi(p_4))$, or $(\varphi(p_2),\varphi(p_3),\varphi(p_4))$
is covered by $L$, we can examine each of the $O(n\log n)$ triples in $L$, and use Lemma~\ref{lem:3} to generate $O(1)$ transformations $\varphi$ per triple, in $\OO(n)$ time.

Otherwise, define a small graph $G_\varphi$ with vertices $\{1,2,3,4\}$, where $ij$ is an edge iff $(\varphi(p_i),\varphi(p_j))$
is covered by a pairing in $\MMM$.  We know that each of triple of vertices contains an edge in $G_\varphi$.  It is easy to see (from a short case analysis) that $G_\varphi$ must have $\ge2$ edges.


\begin{itemize}
\item Case 1: $G_\varphi$ contains 2 adjacent edges, w.l.o.g., 12 and 23.
Then $(\varphi(p_1),\varphi(p_2))$ is covered by a pairing $\MATCH_{12}\in\MMM$ between $\Gamma_1$ and $\Gamma_2$,
and $(\varphi(p_2),\varphi(p_3))$ is covered by a pairing $\MATCH_{23}\in\MMM$ between $\Gamma_2$ and $\Gamma_3$.
We overlay the 2 subdivisions along $\Gamma_2$.  We examine the triple $(\MATCH_{12}(e_2),e_2,\MATCH_{23}(e_2))$
for each sub-edge $e_2$ of $\Gamma_2$, and  use Lemma~\ref{lem:3} to generate $O(1)$ transformations $\varphi$ per triple, in $\OO(n)$ time.  The total number of resulting triples
over all $O(\log^2n)$ choices of $\MATCH_{12}$ and $\MATCH_{23}$ is $O(n\log^2n)$, giving $O(n\log^2n)$ transformations.

\vspace{5px}
\item Case 2: $G_\varphi$ contains 2 independent edges, w.l.o.g., 12 and 34.
Then $(\varphi(p_1),\varphi(p_2))$ is covered by a pairing $M_{12}\in\MMM$ between $\Gamma_1$ and $\Gamma_2$,
and $(\varphi(p_3),\varphi(p_4))$ is covered by a pairing $M_{34}\in\MMM$ between $\Gamma_3$ and $\Gamma_4$.

\vspace{5px}
For 2 sub-edges $e$ and $e'$, define the angle interval $\Theta(e,e')=\{\theta_{qq'}: q\in e,\ q'\in e'\}$.
It suffices to enumerate quadruples $(e_1,M_{12}(e_1),e_3,M_{34}(e_3))$ over all sub-edges $e_1$ of $\Gamma_1$
and all sub-edges $e_3$ of $\Gamma_3$, under the restriction that $\Theta(e_1,M_{12}(e_1))-\theta_{p_1p_2}$ intersects
$\Theta(e_3,M_{34}(e_3))-\theta_{p_3p_4}$ (mod $\pi$). See Figure~\ref{fig:4gon} for an example quadruple.

\vspace{5px}
Observe that because $M_{12}$ is monotonically increasing, the angle intervals $\Theta(e_1,M_{12}(e_1))$ are disjoint\footnote{$\Theta(e_1,M_{12}(e_1))$ and $\Theta(e_1',M_{12}(e_1'))$ may share a limit point if $e_1$ and $e_1'$ are adjacent, but this does not affect the proof.} and move monotonically as $e_1$ moves in ccw order. 
Similarly, because $M_{34}$ is monotonically increasing, the angle intervals $\Theta(e_3,M_{34}(e_3))$ are disjoint
and move monotonically as $e_3$ moves in ccw order. 
By overlaying the 2 sets of $O(n)$ intervals $\{\Theta(e_1,M_{12}(e_1))-\theta_{p_1p_2}: \text{ sub-edge } e_1 \text{ of } \Gamma_1\}$
and  $\{\Theta(e_3,M_{34}(e_3))-\theta_{p_3p_4}: \text{ sub-edge } e_3 \text{ of } \Gamma_3\}$, we see that
there are at most $O(n)$ choices of $(e_1,e_3)$ such that $\Theta(e_1,M_{12}(e_1))-\theta_{p_1p_2}$ intersects
$\Theta(e_3,M_{34}(e_3))-\theta_{p_3p_4}$ (mod $\pi$), and they can be enumerated in $O(n)$ time.  The total number
of quadruples over all $O(\log^2n)$ choices of $\MATCH_{12}$ and $\MATCH_{34}$ is $O(n\log^2n)$, giving $O(n\log^2n)$ transformations.
\end{itemize}
\vspace{-1.35em}
\end{proof}

Interestingly, Case 2 exploits a different phenomenon than in our earlier proofs: besides monotone pairings of sub-edges, we have monotone ``pairings of pairs'' of sub-edges.

\begin{figure}
    \centering
    \includegraphics[width=.5\textwidth]{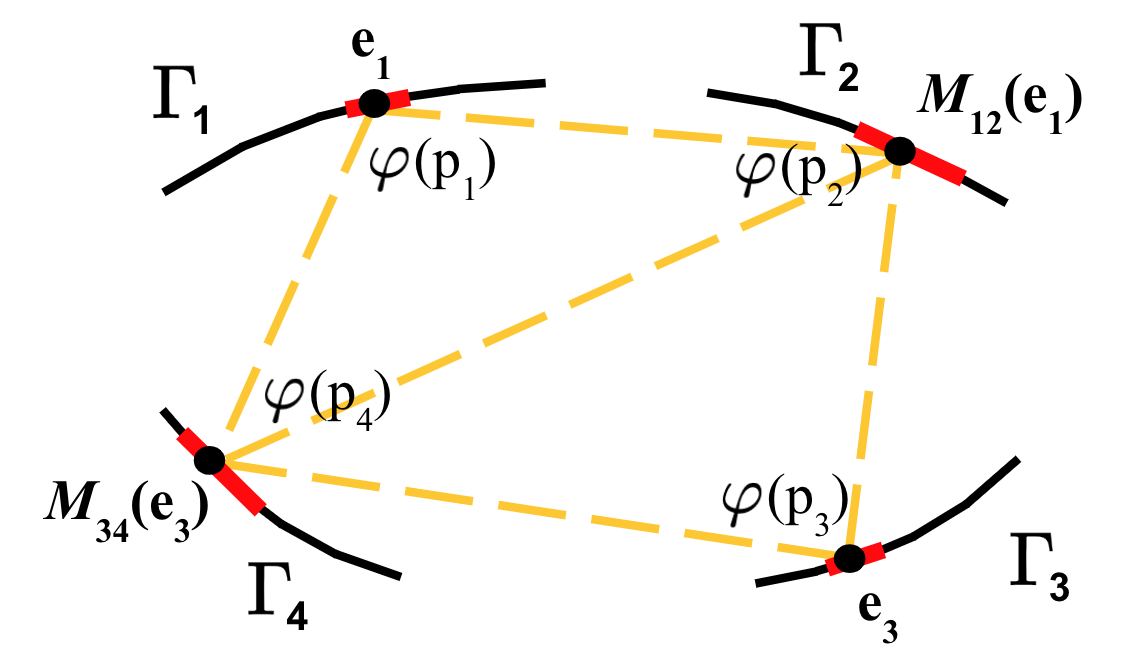}
    \caption{An example quadruple from Case 2 in the proof of Theorem~\ref{thm:4gon}. The pairing $M_{12}$ is produced using $\triangle p_1p_2p_4$ and the pairing $M_{34}$ is produced using $\triangle p_2p_3p_4$.}
    \label{fig:4gon}
\end{figure}

\subsection{Generalizing to $k>4$}

Finally, an algorithm for the 4-contact case for general $k$ immediately follows:

\begin{corollary}\label{cor:kgon:4contact}
Given a $k$-gon $P$ and a convex $n$-gon $Q$ in general position,
there are at most $O(k^4n\log^2 n)$ 
similar copies of $P$ contained in $Q$ that have $4$ different vertices of $P$ on $4$ edges of $Q$,
and they can be enumerated in  $\OO(k^5 n)$ time.
\end{corollary}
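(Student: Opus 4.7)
The plan is to reduce the general $k$-gon 4-contact problem to $O(k^4)$ instances of the $k=4$ case handled by Theorem~\ref{thm:4gon}. Since $Q$ is in general position, every 4-contact similar copy of $P$ in $Q$ has exactly 4 distinct vertices of $P$ touching $\partial Q$, and so is uniquely identified by its ordered 4-subset of contact vertices $(p_{i_1},p_{i_2},p_{i_3},p_{i_4})$ (taken in ccw order around $P$) together with the similarity transformation $\varphi$. Thus, enumerating over all $\binom{k}{4} = O(k^4)$ such 4-subsets and, for each, enumerating all transformations placing those 4 vertices on $\partial Q$, will cover every 4-contact solution (with each solution counted exactly once).

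First, for each 4-subset, I treat $(p_{i_1},\ldots,p_{i_4})$ as a convex 4-gon (any 4 vertices of the convex polygon $P$ form a convex 4-gon, in the inherited ccw order) and invoke Theorem~\ref{thm:4gon} on this 4-gon and $Q$. This produces at most $O(n\log^2 n)$ candidate similarity transformations $\varphi$ with $\varphi(p_{i_j}) \in \partial Q$ for $j=1,\ldots,4$, in $\widetilde{O}(n)$ time per 4-subset.

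Next, for each candidate $\varphi$, I verify that $\varphi(P)\subseteq Q$ by testing $\varphi(p_i)\in Q$ for each of the remaining $k-4$ vertices of $P$; each point-in-convex-polygon test takes $O(\log n)$ time by binary search on $\partial Q$~\cite{PreparataS}, so each candidate is verified in $O(k\log n)$ time. A candidate survives iff it corresponds to a genuine 4-contact containment.

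For the combinatorial bound, each 4-contact solution is produced by exactly one 4-subset (its contact set), so the total number of valid 4-contact copies is at most $O(k^4) \cdot O(n\log^2 n) = O(k^4 n\log^2 n)$. For the time bound, producing candidates costs $O(k^4) \cdot \widetilde{O}(n) = \widetilde{O}(k^4 n)$, and verifying them costs $O(k^4) \cdot O(n\log^2 n) \cdot O(k\log n) = \widetilde{O}(k^5 n)$, giving $\widetilde{O}(k^5 n)$ overall. There is essentially no obstacle here beyond the setup: the only thing to observe is that Theorem~\ref{thm:4gon} applies verbatim to any ordered 4-tuple of vertices of a convex polygon, and the general position assumption ensures the 4-contact solutions are not overcounted across different 4-subsets.
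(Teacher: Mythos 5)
Your proof is correct and matches the paper's own argument: enumerate all $O(k^4)$ choices of four vertices of $P$, apply Theorem~\ref{thm:4gon} to each to get $O(n\log^2 n)$ candidates, and verify each candidate in $O(k\log n)$ time via binary search. The extra remarks about convexity of the 4-subset and non-overcounting under general position are fine but not needed beyond what the paper already states.
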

\begin{proof}
For each of the $O(k^4)$ choices of vertices $p_1,p_2,p_3,p_4$ of $P$,
we generate the $O(n\log^2n)$ similarity transformations from Theorem~\ref{thm:4gon} and test the feasibility
of each transformation in $O(k\log n)$ time via binary searches \cite{BergCKO08}.
\end{proof}

\NORMAL{
\section{Final Remarks}

To keep the presentation cleaner, we have not spelled out the precise polylogarithmic factors in the time bounds,
nor the precise dependencies on $k$ in the $k^{O(1/\eps)}$ factor, since we believe that extra effort
can help in optimizing those factors. Even without extra effort, if we use randomized search~\cite{Chan99}
in the proof of Lemma~\ref{lem:tri:rs}, the number of logarithmic factors in our $O(n\polylog n)$-time algorithm for $k=3$ is only about~3.

The main question we leave open is whether Problem~\ref{prob} can be solved in $O(kn^{1+\varepsilon})$ time for all constant $\varepsilon > 0$, which would be a strict improvement over the previous $\OO(kn^2)$ bound for all $k$.
Similarly, could the number of 4-contact placements of $P$ within $Q$ be upper-bounded by $\OO(kn)$ instead of $\OO(k^4n)$?

}

\bibliographystyle{plainurl}
\bibliography{refs}

\begin{thebibliography}{10}

\bibitem{AgarwalAS}
Pankaj~K. Agarwal, Nina Amenta, and Micha Sharir.
\newblock {L}argest {P}lacement of {O}ne {C}onvex {P}olygon {I}nside {A}nother.
\newblock {\em Discret. Comput. Geom.}, 19(1):95--104, 1998.
\newblock \href {https://doi.org/10.1007/PL00009337}
  {\path{doi:10.1007/PL00009337}}.

\bibitem{AgarwalAS99}
Pankaj~K. Agarwal, Boris Aronov, and Micha Sharir.
\newblock {M}otion {P}lanning for a {C}onvex {P}olygon in a {P}olygonal
  {E}nvironment.
\newblock {\em Discret. Comput. Geom.}, 22(2):201--221, 1999.
\newblock \href {https://doi.org/10.1007/PL00009455}
  {\path{doi:10.1007/PL00009455}}.

\bibitem{AgarwalEricksonSURVEY}
Pankaj~K. Agarwal and Jeff Erickson.
\newblock {G}eometric {R}ange {S}earching and {I}ts {R}elatives.
\newblock {\em Contemporary Mathematics}, 223:1--56, 1999.
\newblock URL: \url{https://jeffe.cs.illinois.edu/pubs/pdf/survey.pdf}.

\bibitem{SharirBOOK}
Pankaj~K. Agarwal and Micha Sharir.
\newblock {D}avenport-{S}chinzel {S}equences and {T}heir {G}eometric
  {A}pplications.
\newblock In J{\"{o}}rg{-}R{\"{u}}diger Sack and Jorge Urrutia, editors, {\em
  Handbook of Computational Geometry}, pages 1--47. North Holland / Elsevier,
  2000.
\newblock \href {https://doi.org/10.1016/B978-044482537-7/50002-4}
  {\path{doi:10.1016/B978-044482537-7/50002-4}}.

\bibitem{Aronov23}
Boris Aronov, Mark de~Berg, Jean Cardinal, Esther Ezra, John Iacono, and Micha
  Sharir.
\newblock {S}ubquadratic {A}lgorithms for {S}ome 3{SUM}-{H}ard {G}eometric
  {P}roblems in the {A}lgebraic {D}ecision-{T}ree {M}odel.
\newblock {\em Comput. Geom.}, 109:101945, 2023.
\newblock \href {https://doi.org/10.1016/J.COMGEO.2022.101945}
  {\path{doi:10.1016/J.COMGEO.2022.101945}}.

\bibitem{AvBoissonnat}
Francis Avnaim and Jean{-}Daniel Boissonnat.
\newblock Polygon {P}lacement {U}nder {T}ranslation and {R}otation.
\newblock {\em {RAIRO} Theor. Informatics Appl.}, 23(1):5--28, 1989.
\newblock \href {https://doi.org/10.1051/ITA/1989230100051}
  {\path{doi:10.1051/ITA/1989230100051}}.

\bibitem{Baird}
Henry~Spalding Baird.
\newblock {\em {M}odel-{B}ased {I}mage {M}atching {U}sing {L}ocation}.
\newblock MIT Press, 1984.

\bibitem{Barba19}
Luis Barba, Jean Cardinal, John Iacono, Stefan Langerman, Aur{\'{e}}lien Ooms,
  and Noam Solomon.
\newblock {S}ubquadratic {A}lgorithms for {A}lgebraic 3{SUM}.
\newblock {\em Discret. Comput. Geom.}, 61(4):698--734, 2019.
\newblock Preliminary version in SoCG 2017.
\newblock \href {https://doi.org/10.1007/S00454-018-0040-Y}
  {\path{doi:10.1007/S00454-018-0040-Y}}.

\bibitem{BarequetHarPeled}
Gill Barequet and Sariel Har{-}Peled.
\newblock {P}olygon {C}ontainment and {T}ranslational
  {M}in-{H}ausdorff-{D}istance {B}etween {S}egment {S}ets are 3{SUM}-{H}ard.
\newblock {\em Int. J. Comput. Geom. Appl.}, 11(4):465--474, 2001.
\newblock \href {https://doi.org/10.1142/S0218195901000596}
  {\path{doi:10.1142/S0218195901000596}}.

\bibitem{Chan99}
Timothy~M. Chan.
\newblock {G}eometric {A}pplications of a {R}andomized {O}ptimization
  {T}echnique.
\newblock {\em Discret. Comput. Geom.}, 22(4):547--567, 1999.
\newblock Preliminary version in SoCG 1998.
\newblock \href {https://doi.org/10.1007/PL00009478}
  {\path{doi:10.1007/PL00009478}}.

\bibitem{Chan18}
Timothy~M. Chan.
\newblock {M}ore {L}ogarithmic-{F}actor {S}peedups for 3{SUM}, (median,
  +)-{C}onvolution, and {S}ome {G}eometric 3{SUM}-{H}ard {P}roblems.
\newblock {\em {ACM} Trans. Algorithms}, 16(1):7:1--7:23, 2020.
\newblock \href {https://doi.org/10.1145/3363541} {\path{doi:10.1145/3363541}}.

\bibitem{Chazelle}
Bernard Chazelle.
\newblock {T}he {P}olygon {C}ontainment {P}roblem.
\newblock {\em Advances in Computing Research}, 1(1):1--33, 1983.
\newblock URL:
  \url{https://www.cs.princeton.edu/~chazelle/pubs/PolygContainmentProb.pdf}.

\bibitem{ChewKedem}
L.~Paul Chew and Klara Kedem.
\newblock {A} {C}onvex {P}olygon {A}mong {P}olygonal {O}bstacles: {P}lacement
  and {H}igh-{C}learance {M}otion.
\newblock {\em Comput. Geom.}, 3:59--89, 1993.
\newblock \href {https://doi.org/10.1016/0925-7721(93)90001-M}
  {\path{doi:10.1016/0925-7721(93)90001-M}}.

\bibitem{DanielsM97}
Karen~L. Daniels and Victor Milenkovic.
\newblock {M}ultiple {T}ranslational {C}ontainment. {P}art {{I}:} {A}n
  {A}pproximate {A}lgorithm.
\newblock {\em Algorithmica}, 19(1/2):148--182, 1997.
\newblock \href {https://doi.org/10.1007/PL00014415}
  {\path{doi:10.1007/PL00014415}}.

\bibitem{BergCKO08}
Mark de~Berg, Otfried Cheong, Marc~J. van Kreveld, and Mark~H. Overmars.
\newblock {\em {C}omputational {G}eometry: {A}lgorithms and {A}pplications}.
\newblock Springer, 3rd edition, 2008.
\newblock URL: \url{https://www.worldcat.org/oclc/227584184}.

\bibitem{DickersonS96}
Matthew Dickerson and Daniel Scharstein.
\newblock {O}ptimal {P}lacement of {C}onvex {P}olygons to {M}aximize {P}oint
  {C}ontainment.
\newblock In {\em Proc. 7th {ACM-SIAM} Symposium on Discrete Algorithm (SODA)},
  pages 114--121, 1996.
\newblock URL: \url{http://dl.acm.org/citation.cfm?id=313852.313899}.

\bibitem{EomLeeAhn}
Taekang Eom, Seungjun Lee, and Hee{-}Kap Ahn.
\newblock {L}argest {S}imilar {C}opies of {C}onvex {P}olygons {A}midst
  {P}olygonal {O}bstacles.
\newblock {\em CoRR}, abs/2012.06978, 2020.
\newblock \href {http://arxiv.org/abs/2012.06978} {\path{arXiv:2012.06978}}.

\bibitem{Gronlund14}
Allan Gr{\o}nlund and Seth Pettie.
\newblock Threesomes, {D}egenerates, and {L}ove {T}riangles.
\newblock {\em J. {ACM}}, 65(4):22:1--22:25, 2018.
\newblock \href {https://doi.org/10.1145/3185378} {\path{doi:10.1145/3185378}}.

\bibitem{Kane18}
Daniel~M. Kane, Shachar Lovett, and Shay Moran.
\newblock {N}ear-{O}ptimal {L}inear {D}ecision {T}rees for {$k$}-{SUM} and
  {R}elated {P}roblems.
\newblock {\em J. {ACM}}, 66(3):16:1--16:18, 2019.
\newblock \href {https://doi.org/10.1145/3285953} {\path{doi:10.1145/3285953}}.

\bibitem{KunnemannNusser}
Marvin K{\"{u}}nnemann and Andr{\'{e}} Nusser.
\newblock {P}olygon {P}lacement {R}evisited: ({D}egree of {F}reedom + 1)-{SUM}
  {H}ardness and an {I}mprovement via {O}ffline {D}ynamic {R}ectangle {U}nion.
\newblock In {\em Proc. {ACM-SIAM} Symposium on Discrete Algorithms (SODA)},
  pages 3181--3201, 2022.
\newblock \href {https://doi.org/10.1137/1.9781611977073.124}
  {\path{doi:10.1137/1.9781611977073.124}}.

\bibitem{Lee04}
D.~T. Lee.
\newblock {I}nterval, {S}egment, {R}ange, and {P}riority {S}earch {T}rees.
\newblock In Dinesh~P. Mehta and Sartaj Sahni, editors, {\em Handbook of Data
  Structures and Applications}. Chapman and Hall/CRC, 2004.
\newblock \href {https://doi.org/10.1201/9781420035179.CH18}
  {\path{doi:10.1201/9781420035179.CH18}}.

\bibitem{LeeEomAhn}
Seungjun Lee, Taekang Eom, and Hee{-}Kap Ahn.
\newblock {L}argest {T}riangles in a {P}olygon.
\newblock {\em Comput. Geom.}, 98:101792, 2021.
\newblock \href {https://doi.org/10.1016/J.COMGEO.2021.101792}
  {\path{doi:10.1016/J.COMGEO.2021.101792}}.

\bibitem{Megiddo}
Nimrod Megiddo.
\newblock {A}pplying {P}arallel {C}omputation {A}lgorithms in the {D}esign of
  {S}erial {A}lgorithms.
\newblock {\em J. {ACM}}, 30(4):852--865, 1983.
\newblock \href {https://doi.org/10.1145/2157.322410}
  {\path{doi:10.1145/2157.322410}}.

\bibitem{Milenkovic96}
Victor Milenkovic.
\newblock {T}ranslational {P}olygon {C}ontainment and {M}inimal {E}nclosure
  {U}sing {L}inear {P}rogramming {B}ased {R}estriction.
\newblock In {\em Proc. 28th {ACM} Symposium on Theory of Computing (STOC)},
  pages 109--118, 1996.
\newblock \href {https://doi.org/10.1145/237814.237840}
  {\path{doi:10.1145/237814.237840}}.

\bibitem{Milenkovic97}
Victor Milenkovic.
\newblock {M}ultiple {T}ranslational {C}ontainment. {P}art {{II}:} {E}xact
  {A}lgorithms.
\newblock {\em Algorithmica}, 19(1/2):183--218, 1997.
\newblock \href {https://doi.org/10.1007/PL00014416}
  {\path{doi:10.1007/PL00014416}}.

\bibitem{Milenkovic}
Victor Milenkovic.
\newblock {R}otational {P}olygon {C}ontainment and {M}inimum {E}nclosure
  {U}sing {O}nly {R}obust 2{D} {C}onstructions.
\newblock {\em Comput. Geom.}, 13(1):3--19, 1999.
\newblock \href {https://doi.org/10.1016/S0925-7721(99)00006-1}
  {\path{doi:10.1016/S0925-7721(99)00006-1}}.

\bibitem{ORourkeSuriTothPOLYGONS}
Joseph O’Rourke, Subhash Suri, and Csaba~D T{\'o}th.
\newblock {P}olygons.
\newblock In {\em Handbook of Discrete and Computational Geometry}, pages
  787--810. CRC Press, 2017.
\newblock URL: \url{http://www.csun.edu/~ctoth/Handbook/chap30.pdf}.

\bibitem{PreparataS}
Franco~P. Preparata and Michael~Ian Shamos.
\newblock {\em {C}omputational {G}eometry: {A}n {I}ntroduction}.
\newblock Texts and Monographs in Computer Science. Springer, 1985.
\newblock \href {https://doi.org/10.1007/978-1-4612-1098-6}
  {\path{doi:10.1007/978-1-4612-1098-6}}.

\bibitem{Sharir96}
Micha Sharir.
\newblock {A} {N}ear-{L}inear {A}lgorithm for the {P}lanar 2-{C}enter
  {P}roblem.
\newblock {\em Discret. Comput. Geom.}, 18(2):125--134, 1997.
\newblock Preliminary version in SoCG 1996.
\newblock \href {https://doi.org/10.1007/PL00009311}
  {\path{doi:10.1007/PL00009311}}.

\bibitem{SharirT}
Micha Sharir and Sivan Toledo.
\newblock {E}xternal {P}olygon {C}ontainment {P}roblems.
\newblock {\em Computational Geometry}, 4(2):99--118, 1994.
\newblock Preliminary version in SoCG 1991.
\newblock \href {https://doi.org/10.1016/0925-7721(94)90011-6}
  {\path{doi:10.1016/0925-7721(94)90011-6}}.

\end{thebibliography}

\NORMAL{

\appendix

\section{2-Contact (2-Anchor) Case}\label{app:2contact}

We note that the ideas behind our algorithms in Section~\ref{sec:3contact} for the 3-contact case can be adapted to handle the 2-contact case as well.  In what follows, suppose that the 2 anchor vertices are $p_1$ and $p_2$.

\subsection{Algorithm for $k=3$}

\newcommand{\PIE}[1]{#1^{\bullet}}

We modify the divide-and-conquer algorithm for $k=3$ in Theorem~\ref{thm:tri:3contact}.
The algorithm actually becomes simpler, as we only need Lemma \ref{lem:int} but not Lemma \ref{lem:tri:match}, and the range searching sub-problems are simpler (no need for ellipses or parametric search). 

W.l.o.g., assume that the origin $o$ is in the interior of $Q$.  For an arc $\Gamma$ of $Q$ which is delimited by points $u$ and $v$, let $\PIE{\Gamma}$ denote the subpolygon bounded by $\Gamma$ and the two rays $\overrightarrow{ou}$ and $\overrightarrow{ov}$.

Given an interval $S$, the algorithm will find a similarity transformation $\varphi$, maximizing the scaling factor, such that
$\varphi(p_1)$ is a vertex $v_1$ of $\Gamma_1(S)$,
$\varphi(p_2)$ is a vertex $v_2$ of $\Gamma_2(S)$,
and $\varphi(p_3)$ is inside $\PIE{\Gamma_3(S)}$.

As before, we partition $S$ into $S^-$ and $S^+$. 
The cases are now as follows:

\begin{itemize}
\item Case 1: $\varphi(p_1)$ is on $\Gamma_1(S^-)$, $\varphi(p_2)$ is on $\Gamma_2(S^-)$, and $\varphi(p_3)$ is inside $\PIE{\Gamma_3(S^-)}$.
We can recursively solve the problem for $S^-$.
\item Case 2: $\varphi(p_1)$ is on $\Gamma_1(S^+)$, $\varphi(p_2)$ is on $\Gamma_2(S^+)$, and $\varphi(p_3)$ is inside $\PIE{\Gamma_3(S^+)}$.
We can recursively solve the problem for $S^+$.
\item Case 3: $\varphi(p_2)$ is on $\Gamma_2(S^-)$, and $\varphi(p_3)$ is inside $\PIE{\Gamma_3(S^+)}$.
Since $\Lambda(\Gamma_2(S^-))+\theta_{p_1p_3}\subseteq S^-$ and $\Lambda(\Gamma_3(S^+))+\theta_{p_1p_2}\subseteq S^+$ are disjoint (mod $\pi$),
we can use Lemma \ref{lem:int} to find a 
sub-arc $I(v_1)\subseteq\Gamma_2(S)$ for every vertex $v_1$ of $\Gamma_1(S)$, such that the condition that  $\varphi(p_3)$ is inside $\PIE{\Gamma_3(S^+)}$ is equivalent to the condition that $\varphi(p_2)$ is on $I(v_1)$.  (Technically, Lemma \ref{lem:int} works with $\ext{\Gamma_3(S^+)}$ instead of $\PIE{\Gamma_3(S^+)}$, but we can just intersect the sub-arc $I(v_1)$ with 2 additional halfplanes that arise from the 2 rays bounding $\PIE{\Gamma_3(S^+)}$.)

The problem in this case now reduces to finding a vertex $v_1$ of $\Gamma_1(S)$ and
a vertex $v_2$ of $\Gamma_2(S^-)$,
maximizing their Euclidean distance,
such that 
$v_2$ is on $I(v_1)$.
This further reduces to answering farthest neighbor queries for a 2D point set with an additional 1D interval constraint.  Farthest neighbor queries in 2D can be answered $\OO(1)$ time after  preprocessing in near-linear time (by point location in the farthest-point Voronoi diagram~\cite{BergCKO08,PreparataS}).  The additional 1D interval constraints can be handled by multi-leveling with range trees~\cite{AgarwalEricksonSURVEY,Lee04}
as before, which increases time bounds by a logarithmic factor.
\end{itemize}

\noindent
All remaining cases are symmetric to Case~3
(swapping subscripts 1 and 2 and/or $S^-$ and $S^+$).
The recurrence for the running time remains the same.

\IGNORE{
****

It is not difficult to adapt the algorithm in Theorem~\ref{thm:tri:3contact} to handle the 2-contact case for $k=3$.
Say the 2 anchor vertices are $p_1$ and $p_2$.
Our new constraint is that $p_3$ lies within $Q$.

During the recursion, we handle this using thin ``wedges'' of $\R^2$.
Pick an arbitrary point $x$ within $Q$ ($x$ will be the same for all wedges).
The wedge $W(\Gamma)$ for an arc $\Gamma$ is the region bounded by the rays $\overrightarrow{xv^-}$ and $\overrightarrow{xv^+}$ that contains $\Gamma$ ($v^-$ and $v^+$ are the delimiting points of $\Gamma$). We give a slight modification of Lemma~\ref{lem:int}. The new goal for each vertex $v_1$ in $\Gamma_1$ is to define $O(1)$ intervals $\INT^{(1)}(v_1), \ldots, \INT^{(O(1))}(v_1)$ of $\Gamma_2$ such that the following holds:
\begin{quote}
    For every similarity transformation $\varphi$ that has $\varphi(p_1)$ on $v_1$ and $\varphi(p_2)$ within some $\INT^{(1)}(v_1), \ldots, \INT^{(O(1))}(v_1)$ of $\Gamma_2$, $\varphi(p_3)$ is within $W(\Gamma_3)$, and $\varphi(p_3)$ is left of $\ext{\Gamma_3}$. 
\end{quote}
This is straightforward because a convex arc and a line can either intersect at most twice or coincide for a single interval \cite{BergCKO08}.

We can proceed with the divide-and-conquer as in Theorem~\ref{thm:tri:3contact}. Our new goal, instead of checking placements directly, is to create: 1) a set of intervals $\{I^{(1)}(v_1), \ldots, I^{(\OO(1))}(v_1)\}$ of $\Gamma_2$ for each vertex $v_1$ of $\Gamma_1$, and 2) a set of intervals $\{I^{(1)}(v_2), \ldots, I^{(\OO(1))}(v_2)\}$ of $\Gamma_1$ for each vertex $v_2$ of $\Gamma_2$, such that the following holds:
\begin{quote}
    A vertex pair $(v_1,v_2)$ is covered by some point-interval pairing iff the similarity transformation $\varphi$ with $\varphi(p_1)$ on $v_1$ and $\varphi(p_2)$ on $v_2$ has $\varphi(p_3)$ left of $\ext{\Gamma_3}$ and within $W(\Gamma_3)$.
\end{quote}
Over all $O(1)$ choices for $\Gamma_1$, $\Gamma_2$, and $\Gamma_3$, this allows us to succinctly represent all feasible 2-anchor placements of $P$ within $Q$.

During the recursion, we still partition $\Gamma_1$, $\Gamma_2$, and $\Gamma_3$ using $S^+$ and $S^-$ as before. The cases are now as follows:

\begin{itemize}
\item Case 1: $\varphi(p_1)$ is on $\Gamma_1(S^-)$, $\varphi(p_2)$ is on $\Gamma_2(S^-)$, and $\varphi(p_3)$ is left of $\ext{\Gamma_3(S^-)}$ and within $W(\Gamma_3(S^-))$.
We can recursively solve the problem for $S^-$.
\item Case 2: $\varphi(p_1)$ is on $\Gamma_1(S^+)$, $\varphi(p_2)$ is on $\Gamma_2(S^+)$, and $\varphi(p_3)$ is left of $\ext{\Gamma_3(S^+)}$ and within $W(\Gamma_3(S^+))$.
We can recursively solve the problem for $S^+$.
\item Case 3: $\varphi(p_1)$ is on $\Gamma_1(S^+)$, $\varphi(p_2)$ is on $\Gamma_2(S^-)$, and $\varphi(p_3)$ is left of $\ext{\Gamma_3(S^+)}$ and within $W(\Gamma_3(S^+))$.
Since $\Lambda(\Gamma_2(S^-))+\theta_{p_1p_3}\subseteq S^-$ and $\Lambda(\Gamma_3(S^+))+\theta_{p_1p_2}\subseteq S^+$ are disjoint (mod $\pi$),
we can produce $O(1)$ intervals of $\Gamma_2(S^-)$ for each vertex of $\Gamma_1(S^+)$ using the modification of Lemma \ref{lem:int}.
\item Case 4: $\varphi(p_1)$ is on $\Gamma_1(S^-)$, $\varphi(p_2)$ is on $\Gamma_2(S^-)$, and $\varphi(p_3)$ is left of $\ext{\Gamma_3(S^+)}$ and within $W(\Gamma_3(S^+))$.
We can produce $O(1)$ intervals of $\Gamma_2(S^-)$ for each vertex of $\Gamma_1(S^-)$ using the modification of Lemma~\ref{lem:int} as in Case 3.
\end{itemize}

\noindent
All other cases are symmetric to Case 3 or Case 4. The total time is still $\OO(n)$.

To find the largest 2-anchor placement of $P$ within $Q$, we first solve the decision problem using range searching and then apply parametric or randomized search. Specifically, given some vertex $v$ and its interval set $S(v)$, we want to determine if there exists a vertex $v'$ within $S(v)$ such that $||v-v'||$ is at least some value. As in Lemmas~\ref{lem:tri:rs}~and~\ref{lem:kgon:rs}, the range searching problem is solvable with $\OO(n)$ preprocessing time and $\OO(1)$ query time.

}

\begin{theorem}\label{thm:tri:2contact}
Given a triangle $P$ and a convex $n$-gon $Q$,
we can find the largest similar copy of $P$ contained in $Q$ that has $2$ vertices of $P$ at $2$ vertices of $Q$ in $\OO(n)$ time.
\end{theorem}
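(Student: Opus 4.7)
The plan is to adapt the divide-and-conquer framework of Theorem~\ref{thm:tri:3contact}, exploiting the fact that the 2-anchor setting is in some ways simpler than the 1-anchor case: only two of the three vertices of $P$ must lie on $\partial Q$, and the third vertex $p_3$ merely has to lie in the interior of $Q$. Consequently, the Pairing Lemma (Lemma~\ref{lem:tri:match}) is not needed, and I would rely solely on Lemma~\ref{lem:int} for the ``cross'' cases. I would fix an origin $o$ in the interior of $Q$, and for each arc $\Gamma$ delimited by vertices $u,v$, let $\PIE{\Gamma}$ denote the subpolygon bounded by $\Gamma$ together with the two rays $\overrightarrow{ou}$ and $\overrightarrow{ov}$. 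These pie-shaped regions give a localized version of ``inside $Q$'' attached to each sub-arc, which is what enables the divide-and-conquer to operate on the $p_3$ constraint.

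Mirroring Theorem~\ref{thm:tri:3contact}, the recursive subproblem for an angular interval $S$ would ask for the largest similarity transformation $\varphi$ with $\varphi(p_1)$ on a vertex of $\Gamma_1(S)$, $\varphi(p_2)$ on a vertex of $\Gamma_2(S)$, and $\varphi(p_3)$ inside $\PIE{\Gamma_3(S)}$. I would split $S$ into $S^-,S^+$ of balanced total arc-complexity and enumerate the $O(1)$ cases. When all three vertices fall into the same sub-interval we recurse; the interesting cross cases, such as $\varphi(p_2)\in\Gamma_2(S^-)$ and $\varphi(p_3)\in\PIE{\Gamma_3(S^+)}$, satisfy the disjointness hypothesis of Lemma~\ref{lem:int} because $\Lambda(\Gamma_2(S^-))+\theta_{p_1p_3}\subseteq S^-$ and $\Lambda(\Gamma_3(S^+))+\theta_{p_1p_2}\subseteq S^+$ are disjoint (mod $\pi$). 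Lemma~\ref{lem:int} then produces, for every vertex $v_1$ of $\Gamma_1(S)$, a sub-arc $\INT(v_1)\subseteq\Gamma_2$ whose points are precisely those placements of $p_2$ for which $\varphi(p_3)$ lies on the correct side of $\ext{\Gamma_3(S^+)}$. The two extra ray-constraints bounding $\PIE{\Gamma_3(S^+)}$ can then be absorbed by intersecting $\INT(v_1)$ with $O(1)$ additional halfplane conditions, keeping $\INT(v_1)$ a union of a constant number of intervals.

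Each cross case thus reduces to the following subproblem: given the vertices $v_1$ of $\Gamma_1(S)$, each equipped with an interval $\INT(v_1)\subseteq\Gamma_2(S)$, find the pair $(v_1,v_2)$ with $v_2\in\INT(v_1)$ maximizing $\|v_1-v_2\|$. For the decision version (``is the best distance at least $\rho$?''), I would preprocess the vertices of $\Gamma_2(S)$ for farthest-point queries via their farthest-point Voronoi diagram and add one level of 1-D range trees indexed by position along $\Gamma_2(S)$ to enforce the interval constraint; this gives $\OO(n)$ preprocessing and $\OO(1)$ query time, and the optimization version follows by parametric search or by Chan's randomized optimization~\cite{Chan99}. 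The overall recurrence is the same as in Theorem~\ref{thm:tri:3contact}, namely $T(m)\le 2T(m/2+O(1))+\OO(m)$, which solves to $\OO(n)$. The main delicacy I expect is verifying that carrying the pie-region constraint $\PIE{\Gamma_3(S)}$ through every level of the recursion introduces only $O(1)$ extra halfplane cuts per vertex $v_1$, so that $\INT(v_1)$ remains a constant union of intervals and the multi-level range-searching budget stays $\OO(n)$.
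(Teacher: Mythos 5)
Your proposal is correct and follows essentially the same route as the paper's Appendix~\ref{app:2contact}: the same pie-region subpolygons $\PIE{\Gamma_3(S)}$, the same recursion with Lemma~\ref{lem:int} handling the cross cases, and the same reduction to farthest-point Voronoi queries filtered by a 1-D range tree. The only (immaterial) difference is that the paper answers the farthest-neighbor queries directly as an optimization, so no parametric or randomized search layer is needed here.
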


\subsection{Algorithm for $k>3$}

We can also adapt the algorithm in Theorem~\ref{thm:kgon:3contact} to handle the 2-contact case
for general $k$.
We use a variant of Lemma~\ref{lem:kgon:rs} where  $\varphi(p_1)$ is a vertex $v_1$ of $\Gamma_1$, $\varphi(p_2)$ is a vertex $v_2$ on $\INT(v_1)\subseteq \Gamma_2$, and
for each $i\in\{3,\ldots,k\}$, $\varphi(p_i)$ is left of $\ext{\Gamma_k}$.
The disjointness conditions for the arcs are the same.
The proof is very similar to that of Lemma~\ref{lem:kgon:rs}; however, for each vertex $v_2$ on a sub-edge $e_2$ of $\Gamma_2$,
we redefine 
\[ \begin{array}{rrl}
\PPP(e_2)\ = & \{(s,t,u,v)\in\R^4:  & \mbox{$\varphi_{s,t,u,v}(p_2) = v_2$ and $\varphi_{s,t,u,v}(p_i)$ is left of $\ext{\MATCH_{i}(e_2)}$ for all } \\[.5ex]
 &&\mbox{$i\in\{3,\ldots,k\}$}\}\\[1ex]
 R_\rho(e_2)\ = & \{\varphi_{s,t,u,v}(p_1): & \mbox{$(s,t,u,v)\in\PPP(e_2)$ and $s^2+t^2\ge\rho^2$}\}.
\end{array}
\]
Similar to before, $\PPP(v_2)$ is a 2-dimensional polygon in $\R^4$ with $O(k)$ edges (since there are 2 linear equality constraints and $O(k)$
inequality constraints in 4 variables), and $R_\rho(v_2)$ is a region in $\R^2$ which is the intersection of a convex $O(k)$-gon and the
exterior of an ellipse (actually, a circle in this case). We partition the region into $O(k)$ constant complexity regions as before.

The rest of the divide-and-conquer algorithm in Theorem~\ref{thm:kgon:3contact} requires no major changes.

\begin{theorem}\label{thm:kgon:2contact}
Given a $k$-gon $P$ and a convex $n$-gon $Q$ (where $k \le n$),
we can find the largest similar copy of $P$ contained in $Q$ that has $2$ vertices of $P$ at $2$ vertices of $Q$
in $O(k^{O(1/\eps)}n^{1+\eps})$ time for any $\eps>0$.  
\end{theorem}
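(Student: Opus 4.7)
The plan is to mirror the divide-and-conquer framework of Theorem~\ref{thm:kgon:3contact}, substituting the 3-contact range searching primitive of Lemma~\ref{lem:kgon:rs} with the 2-contact variant sketched in the paragraphs preceding the theorem. I would first establish that variant. For each vertex $v_2$ on a sub-edge $e_2$ of $\Gamma_2$, the feasible parameter set in $\R^4$ is now defined by two linear equalities pinning $\varphi_{s,t,u,v}(p_2) = v_2$, together with $O(k)$ halfspace constraints that $\varphi_{s,t,u,v}(p_i)$ lies left of $\ext{\MATCH_i(e_2)}$ for $i\ge 3$. Hence $\PPP(v_2)$ is a 2-dimensional convex $O(k)$-gon and its projection $R_\rho(v_2)$ to the $p_1$-image plane (after imposing $s^2+t^2\ge\rho^2$) is the intersection of a convex $O(k)$-gon with the exterior of a \emph{circle} (not a general ellipse), since $v_2$ is fixed. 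Decomposing each such region into $O(k)$ pieces, I would do range stabbing over the resulting exteriors using the standard fact that the intersection of $O(n')$ disks has near-linear complexity and is constructible in $\widetilde{O}(n')$ time, layered with multi-level range trees to handle both the $O(1)$ halfplane constraints per piece (via the $\Gamma_1$-convexity trick) and the $\INT(v_1)$ sub-arc constraint. Randomized optimization (or parametric search) lifts the decision algorithm to the optimization version.

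With this primitive in hand, the recursion is essentially verbatim from Theorem~\ref{thm:kgon:3contact}. I would enlarge the input to include arcs $\Gamma_3,\ldots,\Gamma_k$ such that $\varphi(P)\subseteq Q$ iff $\varphi(p_i)$ is left of $\ext{\Gamma_i}$ for all $i$, try all $O(k^2)$ choices of anchor pair $p_1,p_2$ and $O(1)$ choices of $\Gamma_1,\Gamma_2$, then partition $\Gamma_1$ and $\Gamma_2$ into $r$ sub-arcs each (with the $\Gamma_2$-split weighted by $m(\gamma_2)=\sum_{i\ge 3}|\CLIP_i(\Gamma_i,\gamma_2)|+|\gamma_2|$). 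For each pair $(\gamma_1,\gamma_2)$, either all disjointness conditions $\Lambda(\gamma_1)+\theta_{p_2p_i}$ vs.\ $\Lambda(\gamma_2)+\theta_{p_1p_i}$ hold for $i\ge 3$ (Case~1, invoke the 2-contact variant of Lemma~\ref{lem:kgon:rs} on $\langle\gamma_1,\gamma_2,\CLIP_3(\Gamma_3,\gamma_2),\ldots,\CLIP_k(\Gamma_k,\gamma_2)\rangle$), or some condition fails and we recurse. The 3-contact-specific enumeration of transformations with $\varphi(p_3)$ on $\Gamma_3\setminus\CLIP_3(\Gamma_3,\gamma_2)$ via Lemma~\ref{lem:int} is no longer needed here, since $p_3$ is not a distinguished contact vertex. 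The recurrence is unchanged, $T(\mmm)\le O(kr)\,T(\mmm/r+O(k)) + \widetilde{O}(k^2r^2(\mmm/r+k))$ with naive base case $T(\mmm)=O(k^2\mmm^2)$ for $\mmm\le kr$, solving to $O(k^{O(1/\eps)}\mmm^{1+O(\eps)})$ upon choosing $r=\mmm^\eps$.

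The main obstacle is confirming that the adapted Lemma~\ref{lem:kgon:rs} really does fit in $\widetilde{O}(k^2 n')$ preprocessing and $\widetilde{O}(1)$ query: one has to overlay the $O(k)$ pairings $\MATCH_i$ along $\Gamma_2$, further subdivide at endpoints of the $\INT(v_1)$ intervals, and ensure that the circle-exterior range stabbing plus the multi-level halfplane/interval machinery compose cleanly without creating an additional factor of $n$. This is no harder than in the 3-contact proof (and arguably easier, since circles are more rigid than general ellipses), but it is the one place where I would need to write out the bookkeeping carefully. Everything else is a faithful recycling of the 3-contact recursion and its analysis.
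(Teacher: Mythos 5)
Your proposal is correct and follows essentially the same route as the paper: replace Lemma~\ref{lem:kgon:rs} by a 2-anchor variant in which $\varphi(p_2)$ is pinned to a vertex $v_2$ (so $\PPP(v_2)$ is a 2-dimensional $O(k)$-gon and $R_\rho(v_2)$ is an $O(k)$-gon intersected with the exterior of a circle, handled by the same decomposition, range-stabbing, and multi-level machinery), and then rerun the divide-and-conquer of Theorem~\ref{thm:kgon:3contact} unchanged, dropping only the $p_3$-specific enumeration over $\Gamma_3\setminus\mathrm{clip}_3(\Gamma_3,\gamma_2)$ while keeping the Lemma~\ref{lem:int}-based computation of the restricting sub-arcs $\INT'(v_1)$ for the unclipped portions of the $\Gamma_i$. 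This matches the paper's argument, which is stated there even more tersely.
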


}

\end{document}